\documentclass[
reprint,
superscriptaddress,
 amsmath,amssymb,
 aps,
 pra,
]{revtex4-1}

\usepackage{lipsum}
\usepackage{graphicx}
\usepackage{dcolumn}
\usepackage{bm}
\usepackage{soul}
\usepackage{placeins}
\usepackage{amssymb}
\usepackage{amsthm}
\usepackage[utf8]{inputenc}
\usepackage[english]{babel}
\usepackage{mathtools}
\usepackage{tcolorbox}
\usepackage{tikz}
\usetikzlibrary{quantikz2}
\usepackage{braket}
\usepackage{multirow}

\newtheorem{theorem}{Theorem}
\newtheorem{lemma}{Lemma}
\newtheorem{definition}{Definition}
\newtheorem{corollary}[theorem]{Corollary}
\newtheorem{proposition*}{Proposition}

\DeclareMathOperator{\Tr}{Tr}

\usepackage[colorlinks=true,
            linkcolor=blue,
            citecolor=magenta,
            urlcolor=cyan]{hyperref}

\begin{document}

\title{Maximal coherence of quantum measurement and the resource theory of sharpness}

\author{Kyunghyun Baek}
\email{k.baek@yonsei.ac.kr}
\affiliation{Institute for Convergence Research and Education in Advanced Technology, Yonsei University, Seoul 03722, Republic of Korea}
\affiliation{Department of Quantum Information, Graduate School, Yonsei University, Seoul 03722, Republic of Korea}

\author{Yonggi Jo}
\affiliation{Agency for Defense Development, Daejeon 34186, Republic of Korea}

\author{Hyunchul Nha}
\email{hnha@utep.edu}
\affiliation{Department of Physics, University of Texas at El Paso, El Paso, Texas 79968, USA}

\date{\today}

\begin{abstract}
A resource theory of quantum measurement can be addressed in terms of quantum coherence and measurement sharpness, respectively. The former analyzes the off-diagonal structure of POVM elements in a predetermined basis while the latter analyzes  the deviation from trivial, state-independent, measurements. We establish a direct connection between the two resource theories by identifying measurement sharpness as the maximal coherence that is achievable under all possible unitary changes of the reference basis. For a broad class of POVMs whose elements share a common eigenbasis, we show that the maximal distance-based coherence of measurement coincides exactly with the corresponding distance-based sharpness monotone. We further extend this equivalence, with element-additive distances, to POVMs whose elements admit a common mutually unbiased basis structure. These results provide a measurement-theoretic analogue of the maximal-coherence \& purity correspondence for quantum states. We also show that the maximal coherence of measurement is faithful with respect to trivial measurements and is monotonic under fuzzifying operations for dichotomic measurements, as well as under mixed-unitary and unitarily covariant preprocessing channels. Finally, we illustrate the operational meaning and limitations of the equivalence through qubit POVMs, single-photon phase sensing, and noisy photon-number resolving detection. In particular, the maximal Fisher information in a Mach-Zehnder interferometer is shown to be determined by the squared maximal coherence of the measurement, while in an imperfect photon-number resolving detector the maximal coherence behaves as a proper sharpness monotone, unlike conventional PVM-based unsharpness measures.
\end{abstract}

\maketitle

\section{Introduction}\label{sec:intro}


Quantum systems can provide advantages over their classical counterparts in computation~\cite{Nielsen_Chuang_2010, Preskill2021}, metrology~\cite{Giovannetti2011,Paris2009}, and cryptography~\cite{Pironio2009}. These advantages originate from genuinely non-classical features of quantum states, channel and quantum measurements all together. The framework of quantum resource theories~\cite{Chitambar2019} provides a unified way to characterize and quantify such features by specifying which objects are free, which operations preserve the free set, and which functions serve as resource monotones.

Building on the resource theories of state-based quantities such as entanglement~\cite{Horodecki2009} and coherence~\cite{Baumgratz2014,Streltsov2017}, which have been developed over past decades, the resource-theoretic framework has been extended to quantum channels~\cite{Theurer2019,Gour2019,Gour2020,Gour2021,Xu2019} and measurements~\cite{Baek2020,Tendick2023,Buscemi2024}. General frameworks have also been developed based on robustness measures~\cite{Takagi2019}, and relationships among distinct resources such as entanglement, coherence, and relevant operational meanings have been actively explored~\cite{Streltsov2015, Kwon2019, Theurer2020, Ahnefeld2022}. Within this line of research, two intrinsic properties of a positive operator-valued measure (POVM) have been identified as resources: its \emph{coherence}~\cite{Baek2020}, which captures the off-diagonal structure of the POVM elements with respect to a chosen reference basis, and its \emph{sharpness}~\cite{Tendick2023,Buscemi2024,Busch2004,Carmeli2007,Heinosaari2008,Busch2009,Baek2016,Liu2021,Mitra2022,Mitra2025}, which quantifies how decisively the POVM discriminates orthogonal states and how close it is to an ideal projective observable. The aim of the present work is to establish a precise quantitative relationship between these two quantities.


The resource-theoretic framework has been established~\cite{Tendick2023} by using a general distance-based geometric quantification for measurement resources, which showed that any contractive, jointly convex, distance between POVMs leads to a valid monotone, and provided concrete examples based on the diamond norm and the Schatten $p$-norms. In particular, a complete and operational resource theory of measurement sharpness has been constructed~\cite{Buscemi2024}, identifying the trivial (uninformative) POVMs as the free objects and the \emph{fuzzifying operations}---preprocessing by a CP unital map followed by convex mixing with a trivial measurement---as the free operations. These works provide a well-defined framework in which sharpness can be quantified.

A natural next step is to ask whether the two measurement-based resources of coherence and sharpness are quantitatively related. The two notions capture different aspects of a POVM. Coherence is \emph{basis-dependent}, since rotating the incoherent basis by a unitary $U$ generically changes the degree of coherence, while sharpness is \emph{basis-independent}. Although a hierarchy between measurement resources has been reported~\cite{Tendick2023}, a direct quantitative connection between the two has not been developed. On the state side, an analogous connection was provided by Streltsov \emph{et~al.}~\cite{Streltsov2018}, who showed that the basis-optimized coherence of a quantum state recovers the purity. This maximal-coherence/purity correspondence motivates the study of an analogous relation on the measurement side, namely, whether the basis-optimized coherence of a measurement likewise recovers its sharpness.


In this work, we establish the relationship between coherence and sharpness of measurement for a wide class of POVMs. Define the \emph{maximal coherence of measurement} of $\mathsf{A}$ as
\begin{equation}\label{eq:CDmax_intro}
  C_{D}^{\max}(\mathsf{A})
  \;=\;\max_{U}\,C_{D}(U^{\dagger}\mathsf{A}\,U),
\end{equation}
which is the basis-optimized value of any distance-based coherence monotone $C_D$ over all unitaries $U$ in the underlying Hilbert space. Our main result is that $C_{D}^{\max}$ coincides with the corresponding distance-based sharpness $S_D$ of Ref.~\cite{Tendick2023} whenever the elements of $\mathsf{A}$ share a common eigenbasis, and---for element-additive distances---whenever they admit a common basis mutually unbiased to the incoherent basis. Outside this regime, e.g., for a symmetric, informationally complete (SIC) qubit POVM, the two quantities are inequivalent. We probe these behaviors in three physically motivated settings: a qubit measurement interpolating between a joint $X$--$Y$ POVM and a tetrahedral SIC POVM, where the gap between $C_D^{\max}$ and $S_D$ opens up as the common-MUB structure breaks.

Furthermore, we explore the monotonicity of the maximal coherence of measurement under fuzzifying operations. For dichotomic measurements, we show its monotonicity and for non-dichotomic measuremenet we find operationally relavent classes of channels such as mixed unitary channels and unitarily covariant channels. We show its monotonic behaviour in a Mach--Zehnder interferometer read out by an on--off detector, where the maximal Fisher information is directly related to the maximal coherence of measurement; and a noisy photon-number--resolving detector (PNRD), where $C_{D}^{\max}$ remains monotonic with respect to the quantum efficiency while conventional PVM-based unsharpness measures do not.

The paper is organized as follows. Section~\ref{sec:prelim} reviews the resource theories of coherence and sharpness of measurements. Section~\ref{sec:maxcoh} introduces the maximal coherence of measurement and presents our main results on its equivalence with sharpness and its monotonicity under fuzzifying operations. Section~\ref{sec:examples} illustrates these results with three physical examples: a qubit measurement, a Mach--Zehnder interferometer, and a noisy photon-number--resolving detector. Section~\ref{sec:conclusion} concludes.

\section{Preliminaries}\label{sec:prelim}

\subsection{Coherence of measurements}\label{subsec:coh}

A quantum measurement on a $d$-dimensional Hilbert space with $n$ outcomes
is described by a POVM, i.e., a tuple $\mathsf{A}=\{A_x\}_{x=1}^n$ of
positive semidefinite operators satisfying the completeness relation
$\sum_{x=1}^n A_x = I$. If every POVM element is a projection
($A_x^2 = A_x$), the POVM is called a projection-valued measure (PVM).

In the resource theory of state coherence~\cite{Baumgratz2014,Streltsov2017},
a quantum state is incoherent (with respect to a fixed reference basis $\{|i\rangle\}_{i=1}^{d}$) if it can be written as a statistical mixture of basis projectors, $\sigma_{\rm inc}=\sum_{i=1}^{d}p_i|i\rangle\langle i|$. Whenever a state has nonzero coherence, a quantum measurement should have nonzero coherence to detect or exploit it; this role is most directly seen in state discrimination tasks~\cite{Takagi2019,Takagi2019_2}.

In the same spirit, a quantum measurement is called \emph{incoherent} when all its elements are diagonal in the reference basis~\cite{Baek2020}:
\begin{align}
A_{x}=\sum_{i=1}^{d}\lambda_{i|x}\,|i\rangle\langle i|.
\end{align}
The set of incoherent measurements with $n$ outcomes on a $d$-dimensional system is denoted by $\mathcal{I}(d,n)$.

In a resource theory, free operations are operations that map free objects to free objects. For state coherence, the most general such class is that of \emph{maximally incoherent operations} (MIO), which preserve the set of incoherent states~\cite{AbergMIO,Streltsov2017}. For coherence of measurements, the natural analogue is the set of \emph{maximally incoherent operations for measurements} (MIO-M), defined as those quantum channels $\mathcal{E}$ whose Heisenberg dual maps incoherent measurements to
incoherent measurements:
\begin{align}
\mathcal{E}^{\dagger}(\mathcal{I}(d,n)) \subseteq \mathcal{I}(d,n).
\end{align}
We emphasize that MIO and MIO-M are not equivalent: MIO can in general \emph{create} coherence in measurements when acting on incoherent ones in
the Heisenberg picture, and conversely MIO-M can create coherence in states~\cite{Baek2020}. In what follows, $\mathcal{E}^{\dagger}$ always denotes the Heisenberg dual of a CPTP map $\mathcal{E}$, so that
$\Tr[\mathcal{E}^{\dagger}(A)\rho]=\Tr[A\,\mathcal{E}(\rho)]$.

A natural family of coherence monotones for measurements arises from a distance $D$ between POVMs as
\begin{align}
C_{D}(\mathsf{A})=\min_{\mathsf{B}\in\mathcal{I}(d,n)}D(\mathsf{A},\mathsf{B}),
\end{align}
where $D$ is required to satisfy faithfulness
$D(\mathsf{A},\mathsf{B})=0\Leftrightarrow\mathsf{A}=\mathsf{B}$ and contractivity
$D(\mathsf{A},\mathsf{B})\geq D(\mathcal{E}^{\dagger}(\mathsf{A}),
 \mathcal{E}^{\dagger}(\mathsf{B}))$ for any CP unital map $\mathcal{E}^{\dagger}$.
Together these imply, for any MIO-M operation,
\[
\begin{array}{cll}
\text{(i)} & [\text{Faithfulness}] & C_D(\mathsf{A})=0
   \;\Leftrightarrow\;\mathsf{A}\in\mathcal{I}(d,n),\\
\text{(ii)} & [\text{Monotonicity}] &
   C_D(\mathsf{A})\geq C_D(\mathcal{E}^{\dagger}_{\rm MIO-M}(\mathsf{A})),
\end{array}
\]
which are the standard properties for any convex resource theory~\cite{Chitambar2019} (see Appendix~\ref{App:1} for a self-contained proof). Since MIO-M is the most general set of free operations for any resource theory of coherence of measurements, every MIO-M monotone is also a monotone for any more restrictive resource theory.

A canonical way to construct such a distance between POVMs is via a state-induced statistical distance. Given a statistical distance $D$ on probability distributions, define
\begin{equation}\label{eq:Dstat}
D(\mathsf{A},\mathsf{B})=\sup_{\rho}D\big(\mathbf{p}_{\rho,\mathsf{A}},
\mathbf{p}_{\rho,\mathsf{B}}\big),
\end{equation}
where $\mathbf{p}_{\rho,\mathsf{A}}=\{\Tr[A_x\rho]\}_{x=1}^{n}$. By construction, this maximal statistical distance inherits faithfulness and
contractivity from $D$, and it is jointly convex whenever $D$ is jointly convex (see Appendix~\ref{App:1}, Lemma~\ref{join_convex}). Since a quantum measurement can equivalently be described as a quantum-to-classical channel mapping a state to its outcome distribution, the framework above is a special case of channel divergences. In particular, several relative-entropy
based channel divergences~\cite{Cooney2016,Leditzky2018,Gour2019_ieee,
Gour2019_prl} fit into this formalism, are jointly
convex~\cite{Nielsen_Chuang_2010}, and provide operationally meaningful notions of distinguishability between quantum measurements.

A particularly useful distance with a clear operational meaning is the \emph{diamond distance}. For two quantum channels $\mathcal{N}$ and $\mathcal{M}$,
\begin{align}
D_{\diamond}(\mathcal{N},\mathcal{M})
=\sup_{\rho_{AR}}\frac{1}{2}\big\|(\mathcal{N}\otimes I_R)(\rho_{AR})
-(\mathcal{M}\otimes I_R)(\rho_{AR})\big\|_1,
\end{align}
which equals the optimal single-shot probability advantage in a channel discrimination task. Representing each measurement as a measure-and-prepare channel $\Lambda_{\mathsf{A}}(\rho)=\sum_x\Tr[A_x\rho]\,|x\rangle\langle x|$,
the diamond distance between two measurements reduces to~\cite{Tendick2023}
\begin{align}\label{eq:diamond}
D_\diamond(\mathsf{A},\mathsf{B})
=\frac{1}{2}\,\max_{\rho_{AR}}\sum_x\big\|\sigma_{x}(\rho_{AR})
-\tau_{x}(\rho_{AR})\big\|_1,
\end{align}
where $\sigma_{x}(\rho)=\mathrm{Tr}_1[({A}_{x}\otimes I)\rho]$ and 
$\tau_{x}(\rho)=\mathrm{Tr}_1[({B}_{x}\otimes I)\rho]$, and $\Tr_1[\cdot]$ denotes the trace with respect to the first subsystem \cite{Tendick2023}. 
In this setting, the diamond norm is effectively reduced to a maximal trace distance between the conditional output states, averaged over measurement outcomes. 
By adopting the diamond distanc, one obtains a distance-based coherence monotone with a clear operational interpretation. The associated diamond-norm coherence monotone is
\begin{align}
C_\diamond(\mathsf{A})=\min_{\mathsf{B}\in\mathcal{I}(d,n)}
D_\diamond(\mathsf{A},\mathsf{B}),
\end{align}
which is faithful, contractive, and jointly convex as a consequence of the corresponding properties of the diamond norm~\cite{Tendick2023}.

\subsection{Sharpness of measurements}\label{subsec:sharp}

A useful operational view is that any POVM is an attempt to realize an ideal observable, possibly degraded by classical or quantum noise. Projective measurements provide the prototype of an ideal, repeatable
measurement, and a POVM that is a PVM is therefore called \emph{PVM-sharp}~\cite{Baek2016}.

\begin{definition}[PVM-sharpness]
$\mathsf{A}=\{A_x\}_{x=1}^n$ is PVM-sharp if and only if it is a PVM, that
is, $A_x^2=A_x$ for all $x$.
\end{definition}

Several quantitative measures of \emph{PVM-unsharpness} have been proposed. An information-theoretic quantifier was introduced in~\cite{Baek2016} as the sum of the von Neumann entropies of the POVM elements,
\begin{align}
U_{\rm ent}(\mathsf{A})=\sum_{x=1}^n S(A_x),\quad S(A_x)=-\Tr[A_x\log A_x].
\end{align}
A variance-based unsharpness was developed in~\cite{Liu2021}: the unsharpness matrix
$F_\rho(\mathsf{A})_{xy}=\delta_{xy}\Tr[\rho A_x]
-\Tr\!\left[\rho\,\tfrac{A_xA_y+A_yA_x}{2}\right]$
captures deviations from projectivity through second-order moments, and on the maximally mixed state one defines
\begin{equation}
U_{\rm var}(\mathsf{A})=\big\|F_{I/d}(\mathsf{A})\big\|_1,
\end{equation}
which vanishes if and only if $\mathsf{A}$ is a PVM.

An operator-norm based measure was introduced in~\cite{Mitra2022} as
\begin{equation}
U_{\rm op}(\mathsf{A})=\big\|I-E_{\mathsf{A}}\big\|,\qquad
E_{\mathsf{A}}=\sum_{x}A_{x}^{2},
\end{equation}
where $\|\cdot\|$ is the operator norm. Since $E_{\mathsf{A}}=I$ if and only if $\mathsf{A}$ is a PVM, $U_{\rm op}$ also vanishes precisely on PVMs.

While these measures all faithfully detect deviation from a PVM, none of them is monotonic under the natural class of free operations of a resource theory of sharpness. The reason is that PVM-sharpness does not distinguish between projective measurements and trivial (uninformative) ones whose elements are projectors of rank zero or full rank---both are formally PVMs. A more discriminating notion was therefore proposed in~\cite{Buscemi2024}.

\begin{definition}[Generalized sharpness]\label{def:gensharp}
$\mathsf{A}=\{A_x\}_{x=1}^n$ is generalized-sharp if for every outcome $x$
there exists a normalized vector $|\psi_x\rangle$ such that
$A_x|\psi_x\rangle=|\psi_x\rangle$.
\end{definition}

Equivalently, every effect $A_x$ has eigenvalue $1$. POVMs that do not hold this condition are called \emph{unsharp}. The completely unsharp (or trivial, uninformative) measurements are those of the form
$\mathsf{A}=\{p_x I\}_{x=1}^n$ with $\sum_x p_x = 1$; their outcome
statistics are independent of the measured state, so they extract no information about it~\cite{Tendick2023,Skrzypczyk2019}. The set of trivial measurements with \$n\$ outcomes on a \$d\$-dimensional system is denoted $\mathcal{T}(d,n)$. 

A resource theory of sharpness based on Definition~\ref{def:gensharp} was developed in~\cite{Tendick2023,Buscemi2024}. The free objects are the trivial measurements, and the free operations are required to map any trivial measurement to a trivial measurement. The Heisenberg dual $\mathcal{E}^{\dagger}$ of any CPTP map $\mathcal{E}$ is a CP unital linear map and therefore preserves the form of a trivial measurement element-wise, $\mathcal{E}^{\dagger}(p_x I_A)=p_x I_B$. However, no CP unital map can interconvert two distinct trivial measurements with different distributions $\{p_x\}$. For all trivial measurements with the same outcome set to be equivalent in the resource theory of sharpness, Buscemi \emph{et~al.}~\cite{Buscemi2024} introduced \emph{fuzzifying operations} $\mathcal{F}$, defined element-wise as
\begin{align}\label{FO}
\mathcal{F}:\;A_{x}\;\mapsto\;\alpha\,\mathcal{E}^{\dagger}(A_{x})+(1-\alpha)\,q_{x}\,I_{B},\qquad \forall x\in\{1,\dots,n\},
\end{align}
where $\mathcal{E}^{\dagger}:\mathcal{B}(\mathcal{H}_{A})\to\mathcal{B}(\mathcal{H}_{B})$ is the dual of an arbitrary CPTP quantum preprocessing channel $\mathcal{E}:\mathcal{B}(\mathcal{H}_{B})\to\mathcal{B}(\mathcal{H}_{A})$ between (possibly distinct) Hilbert spaces $\mathcal{H}_{A}$ and $\mathcal{H}_{B}$, $\alpha\in[0,1]$, and $\{q_{x}\}_{x=1}^{n}$ is a probability distribution. By construction, fuzzifying operations preserve the number of outcomes $n$ but \emph{may change the Hilbert space dimension}; in particular, dimension-preserving fuzzifying operations arise when $\mathcal{H}_{A}=\mathcal{H}_{B}$.

In this convex resource theory~\cite{Chitambar2019}, distance-based sharpness monotones are
\begin{equation}\label{eq:D_sharp}
S_D(\mathsf{A})=\min_{\mathsf{B}\in\mathcal{T}(d,n)}D(\mathsf{A},\mathsf{B}).
\end{equation}
Faithfulness and contractivity of $D$, together with joint convexity, imply
\[
\begin{array}{cll}
\text{(i)} & [\text{Faithfulness}] & S_D(\mathsf{A})=0
   \;\Leftrightarrow\;\mathsf{A}\in\mathcal{T}(d,n),\\
\text{(ii)} & [\text{Monotonicity}] & S_D(\mathsf{A})\geq
   S_D(\mathcal{F}(\mathsf{A})),
\end{array}
\]
as proved in~\cite{Tendick2023} and reviewed in Appendix~\ref{App:1}, Lemma~\ref{lem:sharp_mono}. Adopting the diamond norm yields the diamond-norm sharpness monotone
\begin{equation}\label{eq:D_sharp_diamond}
S_\diamond(\mathsf{A})=\min_{\mathsf{B}\in\mathcal{T}(d,n)}
D_\diamond(\mathsf{A},\mathsf{B}),
\end{equation}
which is the canonical sharpness monotone we will use throughout the rest of this work~\cite{Tendick2023}.

\section{Maximal coherence of measurements}\label{sec:maxcoh}

\subsection{Definition and faithfulness}

Coherence is intrinsically basis dependent: rotating the reference basis by a unitary $U$ alters the off-diagonal structure of every POVM element. We define the \emph{maximal coherence of measurement} of $\mathsf{A}$ as the
basis-optimized value of any coherence monotone $C_D$,
\begin{align}
C_{D}^{\max}(\mathsf{A})=\max_{U}C_{D}\!\left(U^{\dagger}\mathsf{A}\,U\right),
\end{align}
where the maximum is taken over all unitaries $U$ acting on the $d$-dimensional Hilbert space and $U^{\dagger}\mathsf{A}\,U$ denotes the POVM $\{U^{\dagger}A_x U\}_{x=1}^{n}$. Equivalently, $C_{D}^{\max}$ is the
coherence of $\mathsf{A}$ optimized over all reference bases. The
goal of this section is to show that, for an operationally relevant class of measurements, $C_{D}^{\max}$ coincides with the corresponding distance-based sharpness $S_D$, and that $C_{D}^{\max}$ behaves as a valid sharpness monotone.

We begin by establishing faithfulness.

\begin{theorem}[Faithfulness]\label{thm:MCfaithful}
The maximal coherence of measurement vanishes if and only if the
measurement is trivial,
\begin{align}
C_{D}^{\max}(\mathsf{A})=0\;\Longleftrightarrow\;
\mathsf{A}=\{p_{x}I\}_{x=1}^{n}
\end{align}
for some probability distribution
$\{p_x\}$.
\end{theorem}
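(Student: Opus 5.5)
The argument reduces both directions to the faithfulness of $C_D$ (property (i) of Sec.~\ref{subsec:coh}), which is the only property of $D$ that will be used. Combining it with $C_D\ge 0$ and the definition of $C_D^{\max}$ as a maximum over unitaries, we get
\[
C_D^{\max}(\mathsf{A})=0 \iff C_D(U^\dagger\mathsf{A}U)=0 \;\;\forall U \iff U^\dagger\mathsf{A}U\in\mathcal{I}(d,n)\;\;\forall U .
\]
So the theorem is equivalent to the purely algebraic statement that a POVM is diagonal in every orthonormal basis exactly when every element is a multiple of the identity.

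\emph{Sufficiency.} If $A_x=p_xI$, then $U^\dagger A_xU=p_xI$ for every $U$. This operator is diagonal in the reference basis $\{|i\rangle\}$, so $U^\dagger\mathsf{A}U\in\mathcal{I}(d,n)$ and $C_D(U^\dagger\mathsf{A}U)=0$ for all $U$. Hence $C_D^{\max}(\mathsf{A})=0$.

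\emph{Necessity.} Assume every element $U^\dagger A_xU$ is diagonal in $\{|i\rangle\}$ for all $U$. Equivalently, each $A_x$ is diagonal in every orthonormal basis $\{U|i\rangle\}$. Since any unit vector $|\psi\rangle$ can be completed to an orthonormal basis, it is the first column of some unitary $U$. Therefore $|\psi\rangle$ is an eigenvector of every $A_x$.

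An operator for which every nonzero vector is an eigenvector is a scalar multiple of the identity. To see this, take orthonormal $|i\rangle,|j\rangle$ with $A_x|i\rangle=\lambda_i|i\rangle$ and $A_x|j\rangle=\lambda_j|j\rangle$. Since $(|i\rangle+|j\rangle)/\sqrt2$ is also an eigenvector, linear independence forces $\lambda_i=\lambda_j$. Hence $A_x=p_xI$. Positivity of $A_x$ gives $p_x\ge0$, and completeness $\sum_xA_x=I$ gives $\sum_xp_x=1$.

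The proof presents no real obstacle. The one point needing care is the existence of the maximum, which would otherwise have to be handled as a supremum. It exists because the unitary group is compact and $U\mapsto C_D(U^\dagger\mathsf{A}U)$ is continuous whenever $D$ is, e.g.\ for the diamond distance. In any case, the vanishing argument above holds equally for a supremum, since $C_D\ge0$.
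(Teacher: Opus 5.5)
Your proof is correct, but it handles necessity differently from the paper. The paper argues by contrapositive and builds one explicit witness basis. For an element $A_x$ not proportional to $I$, it picks the unitary that carries the eigenbasis of $A_x$ onto the Fourier basis, which is mutually unbiased to $\{|i\rangle\}$. The off-diagonal entries of the rotated element are then the nonzero-frequency discrete Fourier coefficients of the eigenvalue vector. At least one of these is nonzero because the eigenvalues are not all equal.

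You instead use the universal quantifier over all bases. Vanishing of $C_D^{\max}$, together with $C_D\ge 0$ and faithfulness of $C_D$, forces $U^\dagger\mathsf{A}U\in\mathcal{I}(d,n)$ for every $U$. Hence every unit vector is an eigenvector of every $A_x$, and so each $A_x$ is a scalar. Both arguments use faithfulness of $C_D$ in the same way.

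Your route is more elementary: it needs no explicit basis construction and no Fourier computation. It also gives an implicit, simpler witness: any basis containing $(|i\rangle+|j\rangle)/\sqrt{2}$, where $|i\rangle,|j\rangle$ are eigenvectors with $\lambda_i\neq\lambda_j$. The paper's construction instead singles out the mutually unbiased rotation, which is the same unitary reused in the lower bound of Theorem~\ref{thm:MCsharp}. That choice foreshadows where the maximum is attained for POVMs with a common eigenbasis. Your remark that the argument works equally for a supremum makes explicit a point the paper leaves implicit.
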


\begin{proof}
The sufficiency is straightforward, since the identity is invariant
under any unitary conjugation, so $\{p_x I\}_{x=1}^n$ is incoherent in every
basis.

To prove necessity, we consider the contrapositive: assume
$\mathsf{A}\notin\mathcal{T}(d,n)$, i.e., at least one element $A_{x}$ is not
proportional to $I$. Spectrally decompose this element as
\begin{align*}
A_x=\sum_{i=1}^{d}\lambda_{i|x}|\lambda_{i|x}\rangle\langle\lambda_{i|x}|,
\end{align*}
where the eigenvalues $\lambda_{i|x}\geq 0$ are not all equal. Pick a
unitary $V$ that maps the eigenbasis $\{|\lambda_{i|x}\rangle\}$ to the
mutually unbiased basis (MUB) $\{|\lambda_{i|x}^{+}\rangle\}$ with respect to
the incoherent basis $\{|i\rangle\}$, namely
$|\langle i|\lambda_{j|x}^{+}\rangle|=1/\sqrt{d}$ for all $i,j$, with
$|\lambda_{j|x}^{+}\rangle=\frac{1}{\sqrt{d}}\sum_{k=0}^{d-1}\omega^{jk}|k\rangle$
and $\omega=e^{2\pi i/d}$. The off-diagonal elements of $V^{\dagger}A_xV$ in
the incoherent basis are
\begin{align*}
\langle k|V^{\dagger}A_xV|l\rangle
&=\frac{1}{d}\sum_{j=1}^{d}\lambda_{j|x}\,\omega^{j(k-l)},\qquad k\neq l,
\end{align*}
which is the discrete Fourier transform of $(\lambda_{1|x},\dots,
\lambda_{d|x})$. Since the $\lambda_{j|x}$ are not all equal, this Fourier
transform has at least one nonvanishing component, hence at least one
off-diagonal element is nonzero. Consequently
$V^{\dagger}\mathsf{A}V\notin\mathcal{I}(d,n)$ and any faithful coherence
monotone $C_D$ obeys $C_D(V^{\dagger}\mathsf{A}V)>0$, which gives
$C_D^{\max}(\mathsf{A})\geq C_D(V^{\dagger}\mathsf{A}V)>0$.
\end{proof}

\subsection{Equivalence with sharpness}


The connection between maximal coherence and sharpness becomes particularly transparent for POVMs whose elements are simultaneously diagonalizable. We show below that, in this case, the maximal coherence of measurement \(C_D^{\max}\) coincides with the distance from the set of trivial
measurements \(\mathcal{T}(d,n)\), and therefore exactly recovers the corresponding distance-based sharpness \(S_D\).

\begin{theorem}[Maximal coherence equals sharpness]\label{thm:MCsharp}
Suppose all POVM elements of $\mathsf{A}=\{A_x\}_{x=1}^n$ share a common eigenbasis, $A_x=\sum_{i=1}^{d}p_{x|i}|\lambda_i\rangle\langle\lambda_i|$. Then for any distance-based maximal coherence $C_D^{\max}$  induced by a distance that is invariant under simultaneous unitary conjugation of POVMs, as in Eqs.~\eqref{eq:Dstat} and \eqref{eq:diamond},
\begin{equation}\label{eq:MC_simplified}
C_{D}^{\max}(\mathsf{A})
=\min_{\sum_{x}q_{x}=1}D\!\left(\mathsf{A},\{q_{x}I\}_{x=1}^{n}\right)
\;=\;S_D(\mathsf{A}).
\end{equation}
\end{theorem}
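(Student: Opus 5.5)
Since $\mathcal{T}(d,n)\subseteq\mathcal{I}(d,n)$, the inequality $C_D^{\max}\le S_D$ is immediate; the substance of the proof is the reverse inequality, which I will get by choosing the reference basis to be mutually unbiased to the common eigenbasis of $\mathsf{A}$ and applying a twirling argument.

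\textbf{Upper bound.} For every unitary $U$ we have $\mathcal{T}(d,n)\subseteq\mathcal{I}(d,n)$. Together with invariance of $D$ under simultaneous unitary conjugation, this gives
\[
C_D(U^\dagger\mathsf{A}U)\le\min_{q}D(U^\dagger\mathsf{A}U,\{q_xI\})=\min_q D(\mathsf{A},\{q_xI\})=S_D(\mathsf{A}).
\]
Maximizing over $U$ yields $C_D^{\max}(\mathsf{A})\le S_D(\mathsf{A})$. This also establishes the first equality in Eq.~\eqref{eq:MC_simplified} as a definition of $S_D$.

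\textbf{Lower bound.} Fix a unitary $V$ mapping $\{|\lambda_i\rangle\}$ onto the Fourier basis $|j^+\rangle=d^{-1/2}\sum_k\omega^{jk}|k\rangle$, as in the proof of Theorem~\ref{thm:MCfaithful}. Set $\mathsf{A}'=V^\dagger\mathsf{A}V$, so each $A'_x=\sum_j p_{x|j}|j^+\rangle\langle j^+|$ is diagonal in the MUB. I will show $C_D(\mathsf{A}')\ge S_D(\mathsf{A})$.

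Let $\mathsf{B}\in\mathcal{I}(d,n)$ be an optimal incoherent POVM for $\mathsf{A}'$. Consider the twirl
\[
\mathcal{E}^\dagger(X)=\frac1d\sum_{m=0}^{d-1}Z^mXZ^{-m},\qquad Z=\sum_k\omega^k|k\rangle\langle k|.
\]
This is a mixed-unitary, CP unital map. The clock operator $Z$ commutes with every incoherent element, so $\mathcal{E}^\dagger(\mathsf{B})=\mathsf{B}$. It acts on the Fourier basis as a cyclic shift $Z|j^+\rangle=|(j+1)^+\rangle$, so the twirl sends each $A'_x$ to $\bigl(\frac1d\sum_jp_{x|j}\bigr)I$. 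However, this step by itself is not useful: contractivity gives $D(\mathsf{A}',\mathsf{B})\ge D(\mathcal{E}^\dagger\mathsf{A}',\mathsf{B})$, which only bounds the distance between a trivial POVM and $\mathsf{B}$.

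The better route is to use the dual twirl by the shift operator $X_s|k\rangle=|k+1\rangle$. It fixes each $A'_x$, because $X_s$ is diagonal in the Fourier basis, and it maps $\mathsf{B}$ to $\{b_xI\}$ with $b_x=\frac1d\Tr B_x$, since it averages diagonal entries cyclically. Contractivity then gives
\[
D(\mathsf{A}',\mathsf{B})\ge D(\mathsf{A}',\{b_xI\})\ge S_D(\mathsf{A}')=S_D(\mathsf{A}).
\]
Hence $C_D^{\max}(\mathsf{A})\ge C_D(\mathsf{A}')\ge S_D(\mathsf{A})$.

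\textbf{Main obstacle.} The key step is identifying the correct twirling group: it must leave $\mathsf{A}'$ invariant while collapsing every incoherent POVM to a trivial one. For the Weyl shift group $\{X_s^m\}$ applied to a MUB-diagonal $\mathsf{A}'$, both properties hold. I also need to check that contractivity holds for mixed-unitary CP unital maps applied to both arguments. For $D$ as in Eqs.~\eqref{eq:Dstat} and \eqref{eq:diamond} this follows from data processing: $\mathcal{E}^\dagger$ is the dual of a channel, so it can be absorbed into the state. Finally, because the common-eigenbasis hypothesis is used only to make $\mathsf{A}'$ shift-invariant, the same argument should extend to the later MUB generalization.
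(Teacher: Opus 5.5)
Your proof is correct and is essentially the paper's argument: the upper bound is identical, and your shift twirl $\frac1d\sum_m X_s^m(\cdot)X_s^{-m}$ is exactly the complete dephasing $\mathcal{E}_+^{\dagger}$ in the Fourier basis that the paper applies to both arguments. This is because $X_s=\sum_j\omega^{-j}|j^+\rangle\langle j^+|$ has distinct eigenvalues, so the twirl fixes $V^\dagger\mathsf{A}V$ and sends each incoherent $B_x$ to $\frac1d\Tr[B_x]\,I$. Your closing aside does not carry over verbatim to Corollary~\ref{cor:MCcorol}: there the $A_x$ need not commute (e.g.\ the joint $X$--$Y$ POVM), so no single dephasing fixes all of them, which is why the paper dephases each element in its own eigenbasis and requires element-additivity of $D$.
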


\begin{proof}
We prove the two inequalities
\begin{align}\label{eq:ineqs}
C_{D}^{\max}(\mathsf{A})\leq S_{D}(\mathsf{A})
\quad\text{and}\quad
C_{D}^{\max}(\mathsf{A})\geq S_{D}(\mathsf{A}).
\end{align}

\emph{Upper bound.} Every trivial measurement is incoherent, so
\begin{align}
C_{D}^{\max}(\mathsf{A})
&=\max_{U}\min_{\mathsf{B}\in\mathcal{I}(d,n)}D(U^{\dagger}\mathsf{A}\,U,\mathsf{B})\\
&\leq\;\max_{U}\min_{\sum_{x}q_{x}=1}D\!\left(U^{\dagger}\mathsf{A}\,U,
\{q_{x}I\}\right)\\
&=\max_{U}\min_{\sum_{x}q_{x}=1}D\!\left(\mathsf{A},\{q_{x}I\}\right)
=S_{D}(\mathsf{A}),
\end{align}
where the second equality follows from the unitary invariance of the maximal statistical distance, $D(U^{\dagger}\mathsf{A}\,U,\{q_{x}I\})
=D(\mathsf{A},\{q_{x}I\})$, since $\{q_{x}I\}$ is itself unitarily invariant.

\emph{Lower bound.} Choose the unitary $V$ that maps the common eigenbasis $\{|\lambda_i\rangle\}$ to the mutually unbiased basis $\{|\lambda_i^{+}\rangle\}$ with respect to the incoherent basis. Then
\begin{align}
C_{D}^{\max}(\mathsf{A})
&\geq\min_{\mathsf{B}\in\mathcal{I}(d,n)}
D(V^{\dagger}\mathsf{A}\,V,\mathsf{B})\\
&\geq\;\min_{\mathsf{B}\in\mathcal{I}(d,n)}
D\!\left(\mathcal{E}_{+}^{\dagger}(V^{\dagger}\mathsf{A}\,V),
\mathcal{E}_{+}^{\dagger}(\mathsf{B})\right),
\end{align}
where the first inequality uses that $V$ is a particular (not necessarily optimal) choice of unitary, and the second uses contractivity of $D$ under the dephasing CP unital map $\mathcal{E}_{+}^{\dagger}$ in the MUB, $\mathcal{E}_{+}^{\dagger}(X)=\sum_{i}\langle\lambda_i^{+}|X|\lambda_i^{+}
\rangle\,|\lambda_i^{+}\rangle\langle\lambda_i^{+}|$.

By construction, $\mathcal{E}_+^{\dagger}$ leaves $V^{\dagger}\mathsf{A}V$ invariant (since this POVM is already diagonal in the MUB by conditions on $\mathsf{A}$ and the choice of $V$). Moreover, applied to any incoherent measurement $\mathsf{B}\in\mathcal{I}(d,n)$, every off-diagonal entry in the MUB is averaged to a constant, so $\mathcal{E}_+^{\dagger}(B_x)=\frac{1}{d}\Tr[B_x]\,I=q_x I$ with $q_x=\Tr[B_x]/d$, i.e., $\mathcal{E}_+^{\dagger}(\mathsf{B})\in\mathcal{T}(d,n)$. We therefore obtain
\begin{align}
C_{D}^{\max}(\mathsf{A})&\geq
\min_{\sum_{x}q_{x}=1}D\!\left(V^{\dagger}\mathsf{A}\,V,\{q_{x}I\}\right) \\
&=\min_{\sum_{x}q_{x}=1}D\!\left(\mathsf{A},\{q_{x}I\}\right)
=S_{D}(\mathsf{A}),
\end{align}
which together with the upper bound proves Eq.~\eqref{eq:MC_simplified}.
\end{proof}

The conditions that all $A_x$ share a common eigenbasis covers in particular all rank-one PVMs, and any POVM obtained by classical post-processing of a fixed PVM. The result extends, for distances that are additive in the outcomes, to a strictly larger class of POVMs whose elements need not share an eigenbasis but admit a common MUB structure:

\begin{corollary}\label{cor:MCcorol}
Let $\mathsf{A}=\{A_x\}_{x=1}^n$ be a POVM whose elements admit eigenbases $\{|\lambda_{i|x}\rangle\}_{i}$ that share a common MUB $\{|\lambda_{j}^{+}\rangle\}$ relative to the incoherent basis, i.e., $|\langle\lambda_{i|x}|\lambda_{j}^{+}\rangle|=1/\sqrt{d}$ for all $x,i,j$. If furthermore the distance is element-additive, $D(\mathsf{A},\mathsf{B})=\sum_{x}D(A_x,B_x)$, then
\begin{equation}\label{eq:MC_simplified_cor}
C_{D}^{\max}(\mathsf{A})\;=\;S_D(\mathsf{A}).
\end{equation}
\end{corollary}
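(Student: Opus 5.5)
The plan is to mirror the proof of Theorem~\ref{thm:MCsharp}: establish the two inequalities $C_D^{\max}(\mathsf{A})\le S_D(\mathsf{A})$ and $C_D^{\max}(\mathsf{A})\ge S_D(\mathsf{A})$. Element-additivity replaces the joint handling of all outcomes by a single common change of basis.

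\emph{Upper bound.} This direction needs no structural assumption on $\mathsf{A}$. Since $\mathcal{T}(d,n)\subseteq\mathcal{I}(d,n)$, we have $C_D(U^\dagger\mathsf{A}U)\le\min_{q}D(U^\dagger\mathsf{A}U,\{q_xI\})$ for every $U$. Unitary invariance of $D$, together with $U^\dagger q_xIU=q_xI$, then gives $\min_q D(\mathsf{A},\{q_xI\})=S_D(\mathsf{A})$. I would simply cite the upper-bound part of Theorem~\ref{thm:MCsharp} verbatim.

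\emph{Lower bound.}
\begin{itemize}
\item[1.] Choose $V$ mapping the common MUB $\{|\lambda_j^+\rangle\}$ (assumed unbiased with respect to every eigenbasis of every $A_x$) to the Fourier basis $\{|\lambda_j^{+}\rangle\}$ of the incoherent basis. Equivalently, choose $V$ so that $V^\dagger A_x V$ has eigenbases that are unbiased with respect to the incoherent basis $\{|i\rangle\}$. I read the hypothesis as providing exactly this: there is a basis $\{|\mu_i\rangle\}$ which is an eigen-structure-compatible basis for all elements and mutually unbiased to $\{|i\rangle\}$ after rotation.
\item[2.] Replace the joint dephasing used in Theorem~\ref{thm:MCsharp} by an element-wise argument, using
\begin{equation*}
C_D(V^\dagger\mathsf{A}V)=\min_{\mathsf{B}\in\mathcal{I}}\sum_x D(V^\dagger A_xV,B_x).
\end{equation*}
\item[3.] For each $x$, apply contractivity of $D$ under a CP unital map $\Phi_x^\dagger$: the dephasing in the eigenbasis of $V^\dagger A_xV$. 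This map fixes $V^\dagger A_xV$ and sends every diagonal (incoherent) $B_x$ to $(\Tr B_x/d)I$, precisely because that eigenbasis is unbiased to $\{|i\rangle\}$. Hence $D(V^\dagger A_xV,B_x)\ge D(V^\dagger A_xV,q_xI)$ with $q_x=\Tr B_x/d$, where $\sum_x q_x=1$ by completeness of $\mathsf{B}$.
\item[4.] Sum over $x$, minimize over $q$, and undo $V$ by unitary invariance to obtain $C_D^{\max}(\mathsf{A})\ge S_D(\mathsf{A})$.
\end{itemize}

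\emph{Main obstacle.} The delicate step is step 3. The maps $\Phi_x^\dagger$ differ from outcome to outcome, so they cannot be applied to the whole POVM at once. This is exactly why element-additivity is needed, together with contractivity of the single-element distance $D(A_x,B_x)$ under CP unital maps. I would check that this contractivity follows from the POVM-level contractivity: apply the POVM version to dichotomic completions, or assume it as part of what element-additive means.

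A second point to verify is that a single $V$ works for all $x$ simultaneously. The hypothesis supplies one common MUB for all elements, so after the rotation sending that common basis to the incoherent basis, each element's eigenbasis is unbiased with respect to the incoherent basis. I would make this the precise reading of the hypothesis: the common basis $\{|\lambda_j^+\rangle\}$ is rotated onto $\{|i\rangle\}$. With that reading the dephasing facts in step 3 are immediate from the computation in the proof of Theorem~\ref{thm:MCfaithful}.
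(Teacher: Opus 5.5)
Your proposal is correct and is essentially the paper's proof. The paper uses the same $V$, sending the incoherent basis onto the common MUB; this is the reading you settle on in your final paragraph, not the garbled first sentence of step~1. It then writes $C_D(V^{\dagger}\mathsf{A}V)=\min_{\mathsf{B}}\sum_x D(A_x,VB_xV^{\dagger})$ and dephases in the eigenbasis of $A_x$, which is exactly your step~3 conjugated by $V$. The element-wise contractivity you flag as delicate is invoked by the paper simply as ``contractivity of $D$'', with no further justification.
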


\begin{proof}
The upper bound is established as in Theorem~\ref{thm:MCsharp}.

For the lower bound, choose the unitary $V$ mapping the incoherent basis to
the common MUB $\{|\lambda_{j}^{+}\rangle\}$. By unitary invariance of $D$,
\begin{align}
C_{D}^{\max}(\mathsf{A})
&\geq\min_{\mathsf{B}\in\mathcal{I}(d,n)}
D(V^{\dagger}\mathsf{A}\,V,\mathsf{B})\\
&=\min_{\mathsf{B}\in\mathcal{I}(d,n)}
D(\mathsf{A},V\mathsf{B}V^{\dagger})\\
&=\min_{\mathsf{B}\in\mathcal{I}(d,n)}
\sum_{x}D(A_x,VB_xV^{\dagger}).
\end{align}
For each $x$ apply the dephasing CP unital map
$\mathcal{E}_x^{\dagger}(X)=\sum_i\langle\lambda_{i|x}|X|\lambda_{i|x}\rangle\,
|\lambda_{i|x}\rangle\langle\lambda_{i|x}|$ corresponding to the
eigenbasis of $A_x$. By contractivity of $D$,
\begin{align}
D(A_x,VB_xV^{\dagger})\geq
D\!\left(\mathcal{E}_x^{\dagger}(A_x),\mathcal{E}_x^{\dagger}(VB_xV^{\dagger})\right).
\end{align}
Now $\mathcal{E}_x^{\dagger}(A_x)=A_x$, while $VB_xV^{\dagger}$ is a diagonal
operator in the MUB $\{|\lambda_{j}^{+}\rangle\}$ which (by the MUB
property) maps under $\mathcal{E}_x^{\dagger}$ to a multiple of the identity,
$\mathcal{E}_x^{\dagger}(VB_xV^{\dagger})=q_x I$, with
$q_x=\Tr[VB_xV^{\dagger}]/d=\Tr[B_x]/d$. Summing over $x$ then yields
\begin{align}
C_{D}^{\max}(\mathsf{A})\geq\min_{\sum_x q_x=1}\sum_x D(A_x,q_x I)
=S_D(\mathsf{A}),
\end{align}
which combined with the upper bound proves the corollary.
\end{proof}



A concrete and operationally meaningful family of distances satisfying the element-additivity condition in Corollary~\ref{cor:MCcorol} is the Schatten-$p$ norm distance introduced by Tendick \emph{et al.} in their distance-based quantification for measurement resources~\cite{Tendick2023}.  This distance is defined, for $p\in[1,\infty]$, by
\begin{equation}\label{eq:DpSinglePOVM}
D_{p}(\mathsf{A},\mathsf{B})
=\frac{1}{2}\sum_{x=1}^{n}\big\|A_{x}-B_{x}\big\|_{p},
\end{equation}
where $\|X\|_p=(\Tr|X|^p)^{1/p}$ denotes the Schatten $p$-norm. As shown in Theorem~5 of~\cite{Tendick2023}, the case $p=\infty$, corresponding to the spectral norm, defines a faithful and jointly convex distance that is contractive under CP unital maps, and hence satisfies all the requirements for a distance between measurements in the sense of \cite[Definition~1]{Tendick2023}. Moreover, the distance in \eqref{eq:DpSinglePOVM} is manifestly element-additive required in Corollary~\ref{cor:MCcorol}, since it can be written as
$D_{p}(\mathsf{A},\mathsf{B})=\sum_{x}D_{p}(A_{x},B_{x})$ with $D_{p}(A_{x},B_{x}):=\tfrac{1}{2}\|A_{x}-B_{x}\|_{p}$. It is also unitarily invariant, because $\|U(A_{x}-B_{x})U^{\dagger}\|_{p}=\|A_{x}-B_{x}\|_{p}$ for every unitary $U$ and every $p\in[1,\infty]$.
 
For $p=\infty$, all conditions of Corollary~\ref{cor:MCcorol} are therefore met, and the corollary implies the equivalence
$
C_{D_{\infty}}^{\max}(\mathsf{A})\;=\;S_{D_{\infty}}(\mathsf{A})$
whenever the elements of $\mathsf{A}$ admit a common MUB structure with respect to the incoherent basis. We note that Tendick \emph{et al.}~\cite{Tendick2023} explicitly observed that rank-$1$ projective measurements mutually unbiased to the incoherent basis maximize the
coherence among all rank-$1$ projective measurements, which is precisely the canonical instance of the common-MUB condition of Corollary~\ref{cor:MCcorol}. For such POVMs, the present equivalence identifies the maximal Schatten $\infty$-norm coherence of measurement with the Schatten $\infty$-norm sharpness, providing an explicit bridge between our results and the geometric quantifiers introduced in~\cite{Tendick2023}.

It is important to note that the equivalence $C_D^{\max}(\mathsf{A})=S_D(\mathsf{A})$ does not hold for arbitrary POVMs. When the elements of $\mathsf{A}$ have neither a common eigenbasis nor a common MUB structure (e.g., a SIC POVM in $d=2$, see Sec.~\ref{subsec:qubit}), the maximal coherence is in general smaller than the sharpness, i.e.,  $C_D^{\max}(\mathsf{A})\leq S_D(\mathsf{A})$ due to the inclusion $\mathcal{T}(d,n)\subset\mathcal{I}(d,n)$.

\subsection{Monotonicity under fuzzifying operations}

Even though the maximal coherence of measurements is not generally equivalent to the sharpness, it remains a natural question whether $C_{D}^{\max}$ can be considered as a legitimate monotone, i.e., a quantity that does not increase under any fuzzifying operation~\eqref{FO}, defined as the action of a CP unital map followed by convex mixing with a trivial measurement. Since $C_{D}^{\max}$ vanishes on trivial measurements (Theorem~\ref{thm:MCfaithful}) and is non-increasing under convex mixing by joint convexity, the question reduces to whether $C_{D}^{\max}$ is contractive under arbitrary CP unital maps. We address this question in two complementary cases: we first establish monotonicity for dichotomic quantum measurements in arbitrary dimension, and then for general cases under restricted but operationally relevant classes of channels, such as mixed-unitary channels and unitarily covariant channels.

\begin{corollary}[Monotonicity for dichotomic measurements]\label{cor:dichotomic_monotone}
Let $\mathsf{A}=\{A_{1},A_{2}\}$ be a dichotomic POVM on a $d$-dimensional Hilbert space ($d\geq 2$), and let $C_D^{\max}$ be the maximal coherence of measurement induced by a contractive, jointly convex, unitarily invariant distance $D$. Then $C_D^{\max}$ is monotonic under every fuzzifying operation~\eqref{FO}: for any CPTP channel $\mathcal{E}$ and any $\alpha\in[0,1]$, $\{q_{x}\}$ with $\sum_x q_x=1$,
\begin{equation}\label{eq:dichotomic_DPI}
C_{D}^{\max}\bigl(\mathcal{F}(\mathsf{A})\bigr)\;\leq\;C_{D}^{\max}(\mathsf{A}).
\end{equation}
\end{corollary}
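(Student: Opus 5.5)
The plan is to reduce the statement to Theorem~\ref{thm:MCsharp}. The key observation is that a dichotomic POVM $\{A_1,A_2\}$ satisfies $A_2=I-A_1$. Hence its two elements always commute and share a common eigenbasis, the eigenbasis of $A_1$. Theorem~\ref{thm:MCsharp} then gives $C_D^{\max}(\mathsf{A})=S_D(\mathsf{A})$ for every dichotomic POVM, in any dimension $d\geq 2$.

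Next, we note that fuzzifying operations preserve the number of outcomes. The image $\mathcal{F}(\mathsf{A})=\{\alpha\mathcal{E}^{\dagger}(A_x)+(1-\alpha)q_xI_B\}_{x=1}^{2}$ is therefore again a dichotomic POVM, on $\mathcal{H}_B$. Its elements again sum to $I_B$, so they commute. Applying Theorem~\ref{thm:MCsharp} once more, now on $\mathcal{H}_B$ with $d_B\geq2$, gives $C_D^{\max}(\mathcal{F}(\mathsf{A}))=S_D(\mathcal{F}(\mathsf{A}))$.

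We then chain these identities with the sharpness monotonicity reviewed in Lemma~\ref{lem:sharp_mono}:
\begin{equation*}
C_D^{\max}(\mathcal{F}(\mathsf{A}))=S_D(\mathcal{F}(\mathsf{A}))\leq S_D(\mathsf{A})=C_D^{\max}(\mathsf{A}).
\end{equation*}
The middle inequality uses exactly the hypotheses of the corollary: $D$ is contractive under CP unital maps and jointly convex. The two outer equalities use unitary invariance of $D$, as required in Theorem~\ref{thm:MCsharp}.

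The main point to check is that the hypotheses of Theorem~\ref{thm:MCsharp} genuinely hold on both sides when the fuzzifying operation changes the Hilbert space dimension. Theorem~\ref{thm:MCsharp} should be applied separately in dimension $d_A$ and in dimension $d_B$, with $D$ understood as a family of distances defined for each dimension. For the sharpness inequality to be meaningful, Lemma~\ref{lem:sharp_mono} must cover dimension-changing CP unital maps. I expect that lemma to handle this through contractivity for maps $\mathcal{B}(\mathcal{H}_A)\to\mathcal{B}(\mathcal{H}_B)$ together with $\mathcal{E}^{\dagger}(p_xI_A)=p_xI_B$.

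If $d_B=1$, the image POVM consists of scalars and is trivial. Then $C_D^{\max}(\mathcal{F}(\mathsf{A}))=0$ and the inequality holds automatically.
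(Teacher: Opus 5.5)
Your proposal is correct and follows the paper's proof essentially verbatim. Both $\mathsf{A}$ and $\mathcal{F}(\mathsf{A})$ are dichotomic, so their elements commute and Theorem~\ref{thm:MCsharp} gives $C_D^{\max}=S_D$ on each side; the claim then follows from the monotonicity of $S_D$ under fuzzifying operations (Lemma~\ref{lem:sharp_mono}). Your extra remarks on dimension-changing operations and the degenerate case $d_B=1$ are sound, and they make explicit points the paper leaves implicit.
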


\begin{proof}
The two elements of a dichotomic POVM satisfy $A_{2}=I-A_{1}$ and hence commute, $[A_{1},A_{2}]=0$. Consequently $A_{1}$ and $A_{2}$ share a common eigenbasis, and thus, $C_{D}^{\max}(\mathsf{A})=S_{D}(\mathsf{A})$ according to Theorem~\ref{thm:MCsharp}. The same argument applies to $\mathcal{F}(\mathsf{A})=\{\alpha\mathcal{E}^{\dagger}(A_{1})+(1-\alpha)q_{1}I_{B},\ \alpha\mathcal{E}^{\dagger}(A_{2})+(1-\alpha)q_{2}I_{B}\}$: completeness $\sum_{x}\mathcal{F}(A_{x})=I_{B}$ together with $\mathcal{F}$ having only two outcomes implies that the two elements of $\mathcal{F}(\mathsf{A})$ also commute and share a common eigenbasis, so $C_{D}^{\max}(\mathcal{F}(\mathsf{A}))=S_{D}(\mathcal{F}(\mathsf{A}))$. Combining these two identities with the monotonicity of $S_{D}$ under fuzzifying operations,
\[
C_{D}^{\max}\bigl(\mathcal{F}(\mathsf{A})\bigr)
\;=\;S_{D}\bigl(\mathcal{F}(\mathsf{A})\bigr)
\;\leq\;S_{D}(\mathsf{A})
\;=\;C_{D}^{\max}(\mathsf{A}),
\]
which proves Eq.~\eqref{eq:dichotomic_DPI}.
\end{proof}

Corollary~\ref{cor:dichotomic_monotone} shows that the equivalence $C_{D}^{\max}=S_{D}$ of Theorem~\ref{thm:MCsharp} implies monotonicity for dichotomic POVMs, since fuzzifying operations preserve the number of outcomes. For non-dichotomic POVMs, however, this equivalence is generally unavailable, and monotonicity must be addressed at the channel level, as we show below for the class of mixed-unitary channels.

\begin{theorem}\label{thm:MCmonotone}
Let $C_D$ be a distance-based coherence monotone induced by a jointly convex distance $D$. Then for any quantum channel $\mathcal{E}$ whose dual is a \emph{mixed-unitary} map, $\mathcal{E}^{\dagger}(\cdot)=\sum_{k}p_{k}W_{k}^{\dagger}(\cdot)W_{k}$ with $\{p_k\}$ a probability distribution and $\{W_k\}$ unitary, the maximal coherence of measurement satisfies the monotonicity
\begin{equation}
C_{D}^{\max}\big(\mathcal{E}^{\dagger}(\mathsf{A})\big)\;\leq\;C_{D}^{\max}(\mathsf{A}).
\end{equation}
\end{theorem}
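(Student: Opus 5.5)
The plan is to exploit the fact that a mixed-unitary dual map is a convex combination of unitary conjugations, and that $C_D^{\max}$ is by construction invariant under unitary conjugation of the POVM. First I would record the invariance: for any unitary $W$, $C_D^{\max}(W^{\dagger}\mathsf{A}W)=\max_U C_D(U^{\dagger}W^{\dagger}\mathsf{A}WU)=C_D^{\max}(\mathsf{A})$, since $U\mapsto WU$ is a bijection of the unitary group. Second, I would establish that $C_D^{\max}$ is convex in the POVM, i.e.
\begin{equation*}
C_D^{\max}\Big(\sum_k p_k \mathsf{A}^{(k)}\Big)\leq \sum_k p_k\, C_D^{\max}(\mathsf{A}^{(k)}),
\end{equation*}
where the convex combination is taken element-wise. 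Applying this with $\mathsf{A}^{(k)}=W_k^{\dagger}\mathsf{A}W_k$ and using the invariance immediately gives $C_D^{\max}(\mathcal{E}^{\dagger}(\mathsf{A}))\leq\sum_k p_k C_D^{\max}(\mathsf{A})=C_D^{\max}(\mathsf{A})$.

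For the convexity step, I would first show that $C_D$ itself is convex, which is the standard consequence of joint convexity of $D$ together with convexity of $\mathcal{I}(d,n)$. For each $k$, let $\mathsf{B}^{(k)}\in\mathcal{I}(d,n)$ attain the minimum for $\mathsf{A}^{(k)}$ (or come within $\epsilon$ of it). Then $\sum_k p_k\mathsf{B}^{(k)}$ is again incoherent, since convex combinations of diagonal POVMs are diagonal POVMs. Joint convexity therefore gives
\begin{equation*}
C_D\Big(\sum_k p_k\mathsf{A}^{(k)}\Big)\leq D\Big(\sum_k p_k\mathsf{A}^{(k)},\sum_k p_k\mathsf{B}^{(k)}\Big)\leq\sum_k p_k C_D(\mathsf{A}^{(k)}).
\end{equation*}
This is presumably the content already reviewed in Appendix~\ref{App:1}.

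Convexity then lifts to $C_D^{\max}$. Fix the unitary $U^{*}$ that attains the maximum for $\sum_k p_k\mathsf{A}^{(k)}$; the maximum exists by compactness of the unitary group and continuity of $C_D$, and otherwise a supremum argument suffices. Conjugation is linear, so $U^{*\dagger}(\sum_k p_k\mathsf{A}^{(k)})U^{*}=\sum_k p_k U^{*\dagger}\mathsf{A}^{(k)}U^{*}$. Convexity of $C_D$ bounds this by $\sum_k p_k C_D(U^{*\dagger}\mathsf{A}^{(k)}U^{*})\leq\sum_k p_k C_D^{\max}(\mathsf{A}^{(k)})$.

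I expect the main subtlety to be this convexity step, specifically ensuring that the mixture of the optimal incoherent approximants is admissible. For that, it must remain in $\mathcal{I}(d,n)$ with the same outcome set, which it does, and the distance must be jointly convex over finite mixtures, which the hypothesis supplies. Notably, unitary invariance of $D$ is not even needed here, because the invariance of $C_D^{\max}$ under conjugation follows purely from the maximization over the unitary group.
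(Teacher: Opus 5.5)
Your proof is correct and is essentially the paper's argument. The paper fixes $U$, sets $V_k=W_kU$, applies convexity of $C_D$ to $\sum_k p_k V_k^{\dagger}\mathsf{A}V_k$, and bounds each term by $C_D^{\max}(\mathsf{A})$; this is your convexity-of-$C_D^{\max}$ step combined with unitary invariance, done inline, and your derivation of convexity of $C_D$ from joint convexity of $D$ and convexity of $\mathcal{I}(d,n)$ spells out a step the paper only cites.
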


\begin{proof}
By definition,
\begin{align}
C_{D}^{\max}\big(\mathcal{E}^{\dagger}(\mathsf{A})\big)
&=\max_{U}C_{D}\!\left(U^{\dagger}\sum_{k}p_{k}W_{k}^{\dagger}\mathsf{A}W_{k}\,U\right)
\\&=\max_{U}C_{D}\!\left(\sum_{k}p_{k}V_{k}^{\dagger}\mathsf{A}\,V_{k}\right),
\end{align}
where $V_{k}\equiv W_{k}U$. Since $C_D$ is induced by a jointly convex distance, it is jointly convex (see Appendix~\ref{App:1}, Lemma~\ref{join_convex}) and hence convex, so
\begin{align}
C_{D}\!\left(\sum_{k}p_{k}V_{k}^{\dagger}\mathsf{A}\,V_{k}\right)
&\leq\sum_{k}p_{k}\,C_{D}\!\left(V_{k}^{\dagger}\mathsf{A}\,V_{k}\right)\\
&\leq\sum_{k}p_{k}\,\max_{V}C_{D}\!\left(V^{\dagger}\mathsf{A}V\right)\\
&=C_{D}^{\max}(\mathsf{A}),
\end{align}
which holds for every $U$ and thus for the maximum.
\end{proof}

The same convexity argument extends, with only minor modifications, to unitarily covariant channels. If $\mathcal{E}^{\dagger}$ commutes with a group action under which $D$ is invariant, the contractivity of $D$ and joint convexity of $C_D$ yield $C_{D}^{\max}(\mathcal{E}^{\dagger}(\mathsf{A}))\leq C_{D}^{\max}(\mathsf{A})$ by the proof of Theorem~\ref{thm:MCmonotone}. The Werner--Holevo channel~\cite{Werner2002} can be included in this class.

\begin{corollary}\label{thm:covariant_monotone}
Let $C_D$ be a distance-based coherence monotone induced by a unitarily invariant distance $D$, and let $\mathcal{E}$ be a CPTP channel that is covariant under a unitary group $G$, i.e., $\mathcal{E}(U_g \cdot U_g^\dagger) = U_g \mathcal{E}(\cdot) U_g^\dagger$ for all $g \in G$. If the unitary $U^*$ maximizing $C_D(U^\dagger \mathcal{E}^\dagger(\mathsf{A}) U)$ can be chosen within $G$, then $C_D^{\max}(\mathcal{E}^\dagger(\mathsf{A})) \leq C_D^{\max}(\mathsf{A}).$
\end{corollary}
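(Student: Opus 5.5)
The plan is to use covariance to move the optimal basis rotation through the channel, then finish with contractivity of $D$ and the fact that $\mathcal{E}^{\dagger}$ preserves incoherent measurements.

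\emph{Step 1: covariance in the Heisenberg picture.} From $\mathcal{E}(U_g\rho U_g^{\dagger})=U_g\mathcal{E}(\rho)U_g^{\dagger}$ and the duality $\Tr[\mathcal{E}^{\dagger}(A)\rho]=\Tr[A\,\mathcal{E}(\rho)]$, I will derive
\[
U_g^{\dagger}\mathcal{E}^{\dagger}(A)U_g=\mathcal{E}^{\dagger}(U_g^{\dagger}AU_g)\quad \text{for all } g\in G .
\]
The check is short: for every $\rho$,
\[
\Tr[U_g^{\dagger}\mathcal{E}^{\dagger}(A)U_g\rho]=\Tr[A\,\mathcal{E}(U_g\rho U_g^{\dagger})]=\Tr[A\,U_g\mathcal{E}(\rho)U_g^{\dagger}]=\Tr[\mathcal{E}^{\dagger}(U_g^{\dagger}AU_g)\rho].
\]

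\emph{Step 2: push the optimal unitary inside.} By hypothesis the maximizing unitary is $U^*=U_{g^*}$ for some $g^*\in G$. Using Step 1,
\[
C_D^{\max}(\mathcal{E}^{\dagger}(\mathsf{A}))=C_D\bigl(\mathcal{E}^{\dagger}(U_{g^*}^{\dagger}\mathsf{A}U_{g^*})\bigr).
\]
It therefore suffices to show $C_D(\mathcal{E}^{\dagger}(\mathsf{A}'))\le C_D(\mathsf{A}')$ with $\mathsf{A}'=U_{g^*}^{\dagger}\mathsf{A}U_{g^*}$. Since $C_D(\mathsf{A}')\le C_D^{\max}(\mathsf{A})$ by definition of the maximum, this gives the claim.

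\emph{Step 3: the main obstacle.} The inequality $C_D(\mathcal{E}^{\dagger}(\mathsf{A}'))\le C_D(\mathsf{A}')$ is exactly where the argument can fail. Contractivity alone gives
\[
D(\mathcal{E}^{\dagger}\mathsf{A}',\mathcal{E}^{\dagger}\mathsf{B})\le D(\mathsf{A}',\mathsf{B}),
\]
and this bounds $C_D(\mathcal{E}^{\dagger}\mathsf{A}')$ only if $\mathcal{E}^{\dagger}\mathsf{B}$ is again incoherent, i.e.\ only if $\mathcal{E}$ is MIO-M.

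I will try two routes.

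\begin{itemize}
\item First, exploit the group structure. If $G$ contains the diagonal (phase) unitaries of the incoherent basis, then $\mathcal{E}^{\dagger}$ commutes with the full dephasing map, which is the average over those phases. Hence $\mathcal{E}^{\dagger}$ sends diagonal operators to diagonal ones, so $\mathcal{E}$ is MIO-M and monotonicity (ii) of $C_D$ applies. This covers the motivating case: the Werner--Holevo channel is covariant under the full unitary group, where the hypothesis on $U^*$ is automatic.
\item Second, and otherwise, I will follow the paper's remark and reduce to the mixed-unitary mechanism of Theorem~\ref{thm:MCmonotone}. Twirl over $G$ using covariance and unitary invariance of $D$, and bound by convexity of $C_D$, as in that proof.
\end{itemize}

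I would state explicitly the additional assumption that is actually needed, namely that $\mathcal{E}^{\dagger}(\mathcal{I}(d,n))\subseteq\mathcal{I}(d,n)$, which holds e.g.\ when $G$ contains the diagonal phase group.
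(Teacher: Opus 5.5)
\textbf{Gap: the proposal proves a modified statement, not the one given.} Your Steps 1--2 are correct. They are exactly what the paper's one-line justification amounts to once written out. The paper then appeals to contractivity of $D$ and convexity of $C_D$ ``as in Theorem~\ref{thm:MCmonotone}'', which does not supply the last step.

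The gap is the one you flag in Step 3. The inequality $C_D(\mathcal{E}^{\dagger}(\mathsf{A}'))\le C_D(\mathsf{A}')$ needs $\mathcal{E}^{\dagger}(\mathcal{I}(d,n))\subseteq\mathcal{I}(d,n)$, which is not among the hypotheses. You close it only by adding that assumption, so you prove a modified statement rather than the one given.

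This is unavoidable, and your diagnosis is right. Every channel is covariant under $G=\{I\}$. Given any $\mathcal{E}$ with maximizer $U^*$, the channel $\mathcal{E}'(\rho)=\mathcal{E}(U^*\rho U^{*\dagger})$ has $(\mathcal{E}')^{\dagger}(X)=U^{*\dagger}\mathcal{E}^{\dagger}(X)U^*$. So $I\in G$ is a maximizer for $\mathcal{E}'$, while $C_D^{\max}$ of the output is unchanged. The literal corollary would therefore imply monotonicity of $C_D^{\max}$ under all dimension-preserving CP unital maps, which the paper itself lists as open. Neither your argument nor the paper's sketch can deliver it without extra structure.

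\textbf{Route (a) is sound.} It gives a correct restricted result. Covariance under the diagonal phase group gives $\mathcal{E}^{\dagger}\circ\Delta=\Delta\circ\mathcal{E}^{\dagger}$ for the complete dephasing $\Delta$. Hence $\mathcal{E}$ is MIO-M, and property (ii) together with Steps 1--2 finishes the proof. This even reaches non-unital phase-covariant channels such as amplitude damping, which lie outside Theorem~\ref{thm:MCmonotone}.

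\textbf{Route (b) fails.} Twirling a $G$-covariant channel over $G$ returns $\mathcal{E}$ itself. It yields no decomposition $\mathcal{E}^{\dagger}=\sum_k p_kW_k^{\dagger}(\cdot)W_k$ for the convexity step of Theorem~\ref{thm:MCmonotone} to act on; covariance is not mixed-unitarity.

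\textbf{The Werner--Holevo remark needs correcting.} With the same $U_g$ on input and output, as in the statement, the channels covariant under all of $U(d)$ are exactly the depolarizing channels. These are mixed-unitary and already covered by Theorem~\ref{thm:MCmonotone}. The Werner--Holevo channel instead satisfies $\mathcal{E}(U\rho U^{\dagger})=\bar U\mathcal{E}(\rho)U^{T}$, so the hypothesis on $U^*$ is not ``automatic'' in the statement's sense.

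Your mechanism still handles it. Step 1 becomes $U^{\dagger}\mathcal{E}^{\dagger}(A)U=\mathcal{E}^{\dagger}(\bar U^{\dagger}A\bar U)$. Also, $\mathcal{E}^{\dagger}(X)=(\Tr[X]I-X^{T})/(d-1)$ maps diagonal operators to diagonal ones, so it is MIO-M. Hence $C_D(U^{\dagger}\mathcal{E}^{\dagger}(\mathsf{A})U)\le C_D(\bar U^{\dagger}\mathsf{A}\bar U)\le C_D^{\max}(\mathsf{A})$ for every $U$.
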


However, monotonicity under general fuzzifyng operations is structurally difficult due to the nested optimization in the definition of $C_{D}^{\max}$. The maximization over unitary conjugations and the minimization over incoherent measurements cannot, in general, be interchanged,
\[
\max_{U}\min_{\mathsf{B}\in\mathcal{I}(d,n)}D(U^{\dagger}\mathsf{A}\,U,\mathsf{B})\;\neq\;\min_{\mathsf{B}\in\mathcal{I}(d,n)}\max_{U}D(U^{\dagger}\mathsf{A}\,U,\mathsf{B}).
\]
A minimax interchange of this form would require, for instance, the convexity--concavity conditions of Sion's minimax theorem, which are not satisfied here. In this case, the set of unitaries is not convex, and the objective function is not generally concave in the unitary variable. Hence, for general fuzzifyng operations, it remains open whether $C_{D}^{\max}$ can be considered as a monotone in the resource theory of sharpness.

\begin{table}[]
\begin{tabular}{c|c|c}
\hline\hline
\multicolumn{2}{c|}{POVM and channel class} & Monotonicity \\
\hline
Dichotomic & any fuzzifyng operations & \checkmark~(Cor.~\ref{cor:dichotomic_monotone}) \\
\hline
\multirow{3}{*}{Non-dichotomic}
 & mixed-unitary channel        & \checkmark~(Thm.~\ref{thm:MCmonotone}) \\
 & unitarily covariant channel  & \checkmark~(Cor.~\ref{thm:covariant_monotone}) \\
 & any fuzzifyng operations    & Open \\
\hline\hline
\end{tabular}
\caption{Monotonicity of $C_{D}^{\max}$ under fuzzifying operations, classified by the structure of the input POVM and the underlying CP unital map. For dichotomic POVMs, monotonicity holds under any fuzzifying operation. For non-dichotomic POVMs, monotonicity is established under mixed-unitary channels (Theorem~\ref{thm:MCmonotone}) and unitarily covariant channels (Corollary~\ref{thm:covariant_monotone}), while the case of a general fuzzifyng operations remains open.}
\label{tab:monotonicity}
\end{table}

Table~\ref{tab:monotonicity} summarizes the monotonicity results according to the structure of the input POVM and the underlying CP unital map. For dichotomic POVMs ($n=2$), the two elements $\{A_{1},A_{2}=I-A_{1}\}$ commute and share a common eigenbasis in any dimension, so Theorem~\ref{thm:MCsharp} yields $C_{D}^{\max}=S_{D}$ implying monotonicity under any fuzzifying operations, as shown in Corollary~\ref{cor:dichotomic_monotone}. For non-dichotomic POVMs, the equivalence with sharpness is generally unavailable, and monotonicity must be considered at the channel level: it holds under mixed-unitary channels (Theorem~\ref{thm:MCmonotone}) and under unitarily covariant channels (Corollary~\ref{thm:covariant_monotone}), but whether $C_{D}^{\max}$ remains monotonic under general CP unital maps in this regime is left as an open problem.

\section{Examples}\label{sec:examples}

\subsection{Qubit POVMs: equivalence and its breakdown}\label{subsec:qubit}

\begin{figure*}[t]
    \centering
    \includegraphics[width=0.9\textwidth]{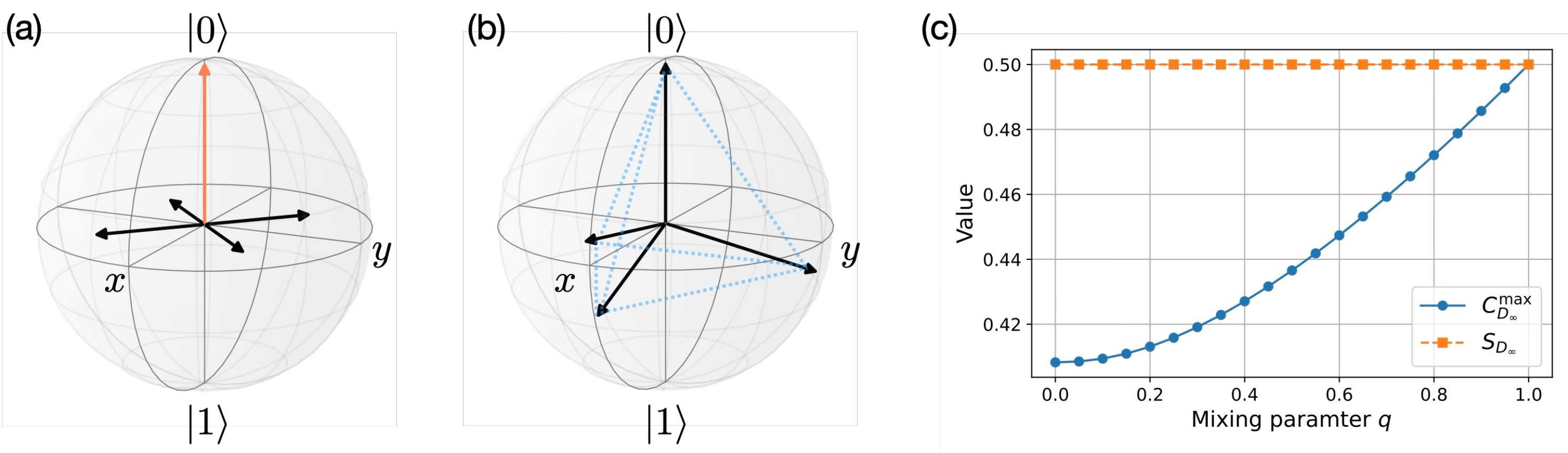}
    \caption{Comparison of the sharpness $S_D$ and the maximal coherence of measurement $C_D^{\max}$. 
    (a) Bloch-sphere representation of the joint $X$--$Y$ POVM $\mathsf{M}_{XY}$;  all four rank-one elements have eigenbases lying in the $xy$-plane of the Bloch sphere; the incoherent basis $\{|0\rangle,|1\rangle\}$ along $z$-axis (orange arrows) is mutually unbiased to every such eigenbasis, so the common-MUB condition of Corollary~\ref{cor:MCcorol} is satisfied.    
    (b) Bloch-sphere representation of a tetrahedral SIC qubit POVM $\mathsf{M}_{\rm SIC}$; its four rank-one elements have neither a common eigenbasis nor a common MUB. 
    (c) $S_D$ and $C_D^{\max}$ as a function of the mixing parameter $p$ for the convex combination $q\,\mathsf{M}_{XY}+(1-q)\,\mathsf{M}_{\rm SIC}$. The two quantities coincide at $q=1$, in agreement with Theorem~\ref{thm:MCsharp}, and differ for $q\neq 1$, illustrating the breakdown of the equivalence when the common-MUB condition fails.}
    \label{FigQubit}
\end{figure*}

To illustrate the role of common-MUB conditions in the relationship between maximal coherence of measurements and sharpness, we compare two qubit POVMs: the joint $X$--$Y$ POVM
\[
\mathsf{M}_{XY}=\big\{\tfrac{1}{4}\big[I+\tfrac{1}{\sqrt{2}}((-1)^i\sigma_x
+(-1)^j\sigma_y)\big]\big\}_{i,j=0,1},
\]
and the symmetric, informationally complete (SIC) tetrahedral qubit POVM
\[
\mathsf{M}_{\rm SIC}=\big\{\tfrac{1}{4}(I+\vec{n}_i\cdot\vec{\sigma})\big\}_{i=1}^{4},
\]
where $\{\vec{n}_i\}$ point to the vertices of a tetrahedron in the Bloch sphere. Both are illustrated in Fig.~\ref{FigQubit}(a) and (b), respectively.

Since we are focusing on the common-MUB condition of Corollary~\ref{cor:MCcorol}, the natural choice of distance is one that is \emph{element-additive}. We therefore adopt, throughout this example, the Schatten $\infty$-norm (spectral norm), which is the $p=\infty$ instance of Eq.~\eqref{eq:DpSinglePOVM}. The induced maximal coherence and sharpness monotones, $C_{D_{\infty}}^{\max}$ and $S_{D_{\infty}}$, are the quantities plotted in Fig.~\ref{FigQubit}(c).

The four rank-one elements of $\mathsf{M}_{XY}$ have eigenbases lying in the $xy$-plane of the Bloch sphere, and the incoherent basis $\{|0\rangle,|1\rangle\}$ along $\hat z$ (the orange arrows in Fig.~\ref{FigQubit}(a)) is mutually unbiased to every such basis. The common-MUB condition of Corollary~\ref{cor:MCcorol} is therefore
satisfied, and we expect $C_{D_{\infty}}^{\max}(\mathsf{M}_{XY})=S_{D_{\infty}}(\mathsf{M}_{XY})$. By contrast, the eigenvectors of the four rank-one elements of $\mathsf{M}_{\rm SIC}$ point to the vertices of a regular tetrahedron and admit no common MUB relative to the incoherent basis, so neither Theorem~\ref{thm:MCsharp} nor Corollary~\ref{cor:MCcorol} applies, and the equivalence is expected to fail.

We verify this prediction by evaluating $C_{D_{\infty}}^{\max}$ and $S_{D_{\infty}}$ along the convex path $\mathsf{M}_{q}=q\,\mathsf{M}_{XY}+(1-q)\,\mathsf{M}_{\rm SIC}$, $q\in[0,1]$. As shown in Fig.~\ref{FigQubit}(c), the two quantities agree exactly at $q=1$, in agreement with Corollary~\ref{cor:MCcorol}, and exhibit a strictly positive gap for $q<1$, illustrating that the common-MUB structure of the POVM elements is essential for the
equivalence to hold.

\subsection{Operational interpretation: connection to Fisher information}

\begin{figure}[ht!]
    \centering
    \includegraphics[width=0.5\textwidth]{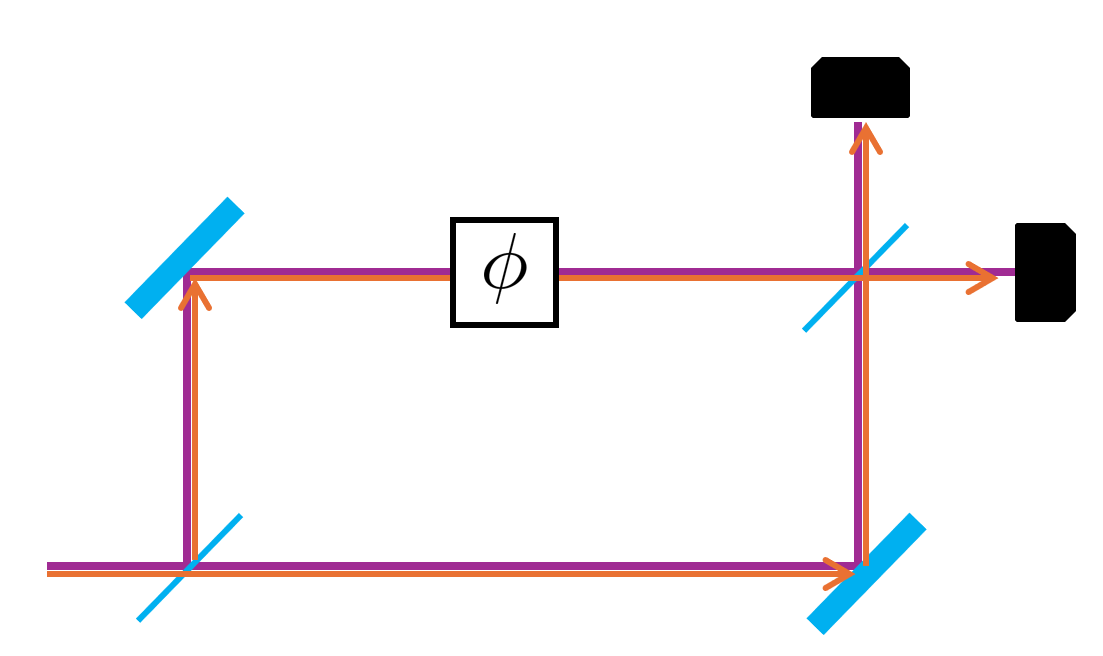}
    \caption{Schematic of a single-photon Mach--Zehnder phase-sensing
    experiment. The unknown phase $\phi$ is imprinted on one arm; an
    imperfect interferometer (visibility $\nu$) realizes the
    binary POVM~\eqref{eq:PEPOVM}.}\label{FigPSEExample}
\end{figure}

Several works have related quantum-information-theoretic resources to operationally meaningful figures of merit, such as Fisher information~\cite{Yuan2015,Ma2019}. We now provide a simple, fully analytical example connecting our maximal coherence of measurement to the Fisher information in single-photon phase sensing.

The setup is illustrated in Fig.~\ref{FigPSEExample}. A single photon is injected into a Mach--Zehnder interferometer, one arm of which contains a controllable phase shifter $\phi$. After recombination at the second beam splitter, the photon exits through one of the two output ports and is detected. In the presence of imperfections such as mode mismatch or detector inefficiencies, the binary measurement realized at the output is described by the POVM $\mathsf{E}(\phi) = \{E_+(\phi), E_-(\phi)\}$ with
\begin{eqnarray}\label{eq:PEPOVM}
E_{+}(\phi)&=&\frac{1}{2}\big[I+\nu(\cos\phi\,\sigma_{x}
+\sin\phi\,\sigma_{y})\big],\\
E_{-}(\phi)&=&I-E_{+}(\phi),
\end{eqnarray}
where $\nu\in[0,1]$ is the fringe visibility, summarizing all imperfections that reduce interference contrast (timing or polarization mismatch, dark counts, finite quantum efficiency, etc.). On the input state $|+\rangle\langle+|$, the detection statistics are
\begin{align}
p_{+}(\phi)=\tfrac{1}{2}(1+\nu\cos\phi),\qquad
p_{-}(\phi)=1-p_{+}(\phi).
\end{align}

The classical Fisher information of this binary distribution is
\begin{align}
\mathcal{I}(\phi)
=\frac{[\partial_{\phi}p_{+}(\phi)]^{2}}{p_{+}(\phi)\,p_{-}(\phi)}
=\frac{\nu^{2}\sin^{2}\phi}{1-\nu^{2}\cos^{2}\phi},
\end{align}
which attains its maximum at $\phi=\pi/2$, yielding
\begin{align}
\mathcal{I}_{\max}=\nu^{2}.
\end{align}

Now consider the same setup from the resource-theoretic perspective. Both elements of \eqref{eq:PEPOVM} share the common eigenbasis along the direction $\mathbf{m}(\phi)=(\cos\phi,\sin\phi,0)$, so we have $C_{D_1}^{\max}=S_{D_1}$ by Theorem~\ref{thm:MCsharp} , where $D_1$ is the trace-distance based maximal statistical distance. A direct computation gives
\begin{align}
C_{D_1}^{\max}(\mathsf{E}(\phi))
=\min_{q\in[0,1]}\sup_{|\mathbf{s}|\leq 1}
\frac{1}{2}\big|1+\nu\,\mathbf{m}\cdot\mathbf{s}-2q\big|
=\frac{\nu}{2},
\end{align}
attained at $q=1/2$ and $\mathbf{s}$ aligned with $\pm\mathbf{m}$. The result is independent of $\phi$, as expected from the unitary invariance of the maximal coherence.

Although $C_{D_1}^{\max}$ above is defined via the trace-distance based maximal statistical distance $D_1$, the POVM $\{E_{+}(\phi),E_{-}(\phi)\}$ is dichotomic, and Appendix~I of~\cite{Tendick2023} guarantees that for dichotomic POVMs the trace-distance based resource quantification coincides with the diamond-norm based one, together with the Schatten $\infty$-norm based distance $D_{\infty}$. The induced maximal coherence monotones therefore satisfy $C_{D_{1}}^{\max}=C_{D_{\diamond}}^{\max}=C_{D_{\infty}}^{\max}=\nu/2$, and the identity above can equivalently be written as
\begin{align}
\mathcal{I}_{\max}\;=\;\big(2\,C_{D_{1}}^{\max}\big)^{2}.
\end{align}
The maximal Fisher information therefore coincides with the squared maximal coherence of measurement, equivalently with the sharpness in any of these mutually equivalent formulations, providing a direct operational interpretation of $C_{D}^{\max}$ as quantifying the achievable phase sensitivity of the device. In the language of resource theories: a sharper measurement enables better metrological performance, and the relation is quantitatively saturated in this paradigmatic single-photon setup.

\subsection{Comparison of sharpness measures: imperfect PNRD}

We finally compare our maximal coherence of measurement with several previously proposed unsharpness measures in a physically motivated example: the photon-number resolving detector (PNRD) with quantum efficiency $\eta\in[0,1]$, whose POVM elements are~\cite{Achilles2003}
\begin{align}\label{eq:PNRDPOVM}
E_{k}^{(\eta)}=\sum_{n=k}^{\infty}\binom{n}{k}\eta^{k}(1-\eta)^{n-k}
|n\rangle\langle n|.
\end{align}

We evaluate six sharpness measures on this family, grouped according to the underlying notion of sharpness they implement. The first group comprises measures of PVM-unsharpness: the Hilbert--Schmidt (HS) norm $\|\mathsf{A}\|_{\rm HS}^{2}=\sum_{x}\Tr(A_x^{2})$, motivated by analogy with state purity; the entropic measure $U_{\rm ent}$~\cite{Baek2016}; the variance-based measure $U_{\rm var}$~\cite{Liu2021}; and the operator-norm based measure $U_{\rm op}$~\cite{Mitra2022}. The second group comprises monotones for the generalized sharpness of Definition~\ref{def:gensharp}: our maximal coherence of measurement $C_{D_{\diamond}}^{\max}$ and the tunability-based measure of Buscemi \emph{et al.}~\cite{Buscemi2024}, $\kappa_{u}^{\ast}(\mathsf{A}\|Z)=\max_{\mathcal{F}}\kappa_{u}(\mathcal{F}(\mathsf{A}):Z)$, where $Z$ is a reference POVM and the optimization is over fuzzifying operations $\mathcal{F}$.

\begin{figure}[ht!]
    \centering
    \includegraphics[width=0.45\textwidth]{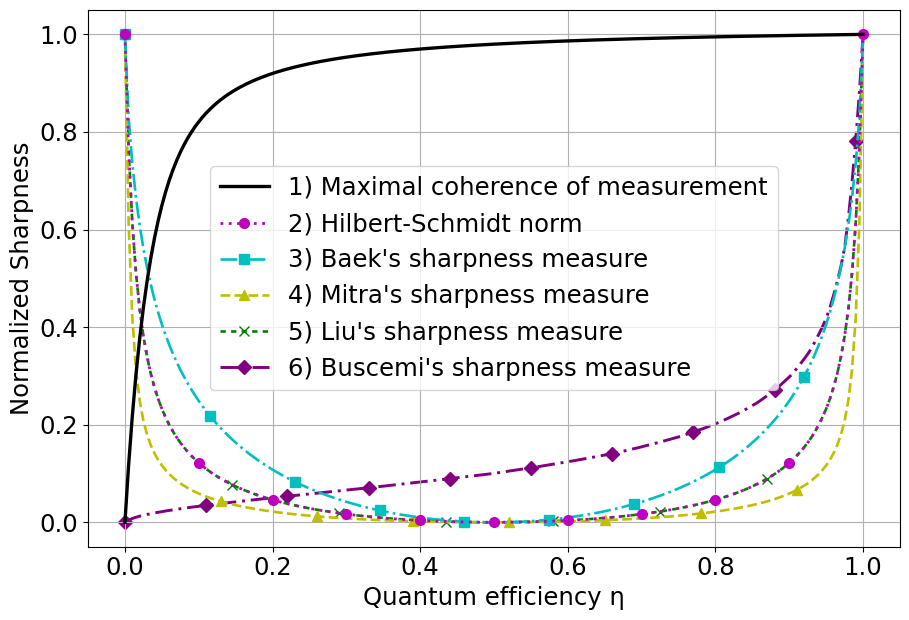}
    \caption{Comparison of sharpness measures for the imperfect PNRD as a   function of quantum efficiency $\eta$. All measures are normalized to
    the unit interval, and unsharpness measures are converted to sharpness
    via $1-\mathcal{US}$. The HS norm and the measures of Baek, Mitra, and
    Liu--Luo are non-monotonic in $\eta$, reaching a maximum at both
    $\eta=0$ (vacuum-projecting trivial measurement) and $\eta=1$ (ideal
    PNRD). Only Buscemi \emph{et al.}'s measure and the maximal coherence
    of measurement are monotonic, vanishing on the trivial measurement and
    increasing to the maximum on the ideal PNRD, hence both qualify as
    valid sharpness monotones in the resource theory of
    sharpness.}\label{FigPNRD}
\end{figure}

At $\eta=1$ the PNRD reduces to an ideal photon-number measurement, which is a PVM on the Fock basis. At the opposite limit $\eta=0$, the POVM elements collapse to $E_{0}^{(0)}=\mathbb{1}$ and $E_{k}^{(0)}=0$ for $k\geq 1$, i.e.\ the trivial measurement that always reports the outcome ``zero photons'' regardless of the input state; this trivial POVM is itself formally a PVM, with rank-zero elements for $k\geq 1$. The HS norm and the unsharpness measures $U_{\rm ent}$~\cite{Baek2016}, $U_{\rm var}$~\cite{Liu2021}, and $U_{\rm op}$~\cite{Mitra2022} all detect deviation from a PVM, so they are maximized at both $\eta=0$ and $\eta=1$ and hence non-monotonic in $\eta$.

This non-monotonicity exposes an operationally important difference between the PVM-based and the generalized notions of sharpness: in any resource theory built on the latter, the trivial measurement (here at $\eta=0$) must be a free object, while under the PVM-based notion it is indistinguishable from the ideal PNRD. By contrast, our maximal coherence of measurement $C_{D_{\diamond}}^{\max}$ and the tunability-based measure of Buscemi \emph{et al.}~\cite{Buscemi2024} both vanish on the trivial POVM at $\eta=0$ and increase monotonically with $\eta$ to attain their maximum on the ideal PNRD. Since all PNRD elements~\eqref{eq:PNRDPOVM} are diagonal in the Fock basis, we have $C_{D_{\diamond}}^{\max}(\mathsf{E}^{(\eta)}) =S_{D_{\diamond}}(\mathsf{E}^{(\eta)})$ from Theorem~\ref{thm:MCsharp}, so the monotonicity of $C_{D_{\diamond}}^{\max}$ follows directly from the monotonicity of sharpness. The maximal coherence of measurement is therefore operationally aligned with the generalized sharpness rather than with the PVM-based notion.

\section{Conclusion}\label{sec:conclusion}

We have investigated the relation between coherence and sharpness of quantum measurements from a resource-theoretic perspective. Measurement coherence is defined relative to a fixed reference basis, whereas measurement sharpness is basis-independent and quantifies the deviation of a POVM from trivial, state-independent measurements. Our main result shows that these two notions are connected by maximizing the coherence of a measurement over all unitary choices of basis.

For POVMs whose elements share a common eigenbasis, we proved that the maximal distance-based coherence of measurement coincides exactly with the corresponding distance-based sharpness monotone. We further extended this equivalence, for element-additive distances, to POVMs whose elements admit a common mutually unbiased basis structure. These results provide a measurement-level analogue of the maximal-coherence/purity relation for quantum states, identifying maximal coherence as a basis-independent quantifier of measurement sharpness in these regimes.

We also studied the monotonicity of maximal coherence under fuzzifying operations. Faithfulness holds in full generality: the maximal coherence vanishes if and only if the measurement is trivial. Monotonicity is established for dichotomic POVMs, as well as for non-dichotomic POVMs under mixed-unitary preprocessing channels and unitarily covariant channels. For arbitrary CP unital preprocessing of general non-dichotomic POVMs, however, monotonicity remains open due to the nested optimization over unitary bases and incoherent measurements.

Finally, we illustrated the scope and limitations of the equivalence through several examples. The joint $X$--$Y$ qubit POVM satisfies the common-MUB condition and hence realizes the equivalence, while the tetrahedral SIC POVM shows that the equivalence can fail without such structure. In a Mach--Zehnder phase-sensing setup, the maximal Fisher information is determined by the squared maximal coherence of the measurement, giving an operational interpretation of the quantity. In a noisy photon-number resolving detector, maximal coherence behaves as a proper sharpness monotone, unlike conventional PVM-based unsharpness measures. 
Future work includes characterizing the gap between maximal coherence and sharpness for general POVMs and resolving monotonicity under arbitrary fuzzifying operations. Another natural direction is to extend the present framework toward a unified resource-theoretic framework of measurement resources, including coherence, sharpness, and incompatibility.

\begin{acknowledgments}
  This work was supported by the National Research Foundation of Korea(NRF) grant funded by the Korea government(MSIT) (No. RS-2025-18362970, RS-2026-25519864), the Korea Institute of Science and Technology Information (KISTI) (Grant No. P25027), the Korean ARPA-H Project through the Korea Health Industry Development Institute (KHIDI) funded by the Ministry of Health \& Welfare, Republic of Korea (RS-2025-25456722), and the Ministry of Trade, Industry, and Energy (MOTIE, Korea, under the project “Industrial Technology Infrastructure Program” (RS2024-00466693).
\end{acknowledgments}

\appendix

\section{Faithfulness and monotonicity of distance-based quantification}\label{App:1}

\begin{lemma}[Joint convexity of induced statistical distances]
\label{join_convex}
If a statistical distance $D$ is jointly convex in its arguments, then the
maximal statistical distance defined in Eq.~\eqref{eq:Dstat} is also
jointly convex on POVMs.
\end{lemma}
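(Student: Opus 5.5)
The approach is direct: take convex combinations of POVM pairs, use linearity of outcome probabilities in the POVM, apply the joint convexity of the classical distance $D$ pointwise in $\rho$, and pass to the supremum. Concretely, fix POVMs $\mathsf{A}^{(1)},\mathsf{A}^{(2)},\mathsf{B}^{(1)},\mathsf{B}^{(2)}$ with the same number $n$ of outcomes and $\lambda\in[0,1]$. Set $\mathsf{A}=\lambda\mathsf{A}^{(1)}+(1-\lambda)\mathsf{A}^{(2)}$ and $\mathsf{B}=\lambda\mathsf{B}^{(1)}+(1-\lambda)\mathsf{B}^{(2)}$, both defined element-wise; these are again POVMs. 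For finitely many terms the argument is identical, so the two-term case suffices.

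The first step is linearity. For every state $\rho$ we have $\Tr[A_x\rho]=\lambda\Tr[A^{(1)}_x\rho]+(1-\lambda)\Tr[A^{(2)}_x\rho]$, which gives
\begin{equation*}
\mathbf{p}_{\rho,\mathsf{A}}=\lambda\,\mathbf{p}_{\rho,\mathsf{A}^{(1)}}+(1-\lambda)\,\mathbf{p}_{\rho,\mathsf{A}^{(2)}},
\end{equation*}
and the same holds for $\mathsf{B}$.

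The second step applies joint convexity of the classical $D$ at fixed $\rho$:
\begin{equation*}
D(\mathbf{p}_{\rho,\mathsf{A}},\mathbf{p}_{\rho,\mathsf{B}})\leq\lambda D(\mathbf{p}_{\rho,\mathsf{A}^{(1)}},\mathbf{p}_{\rho,\mathsf{B}^{(1)}})+(1-\lambda)D(\mathbf{p}_{\rho,\mathsf{A}^{(2)}},\mathbf{p}_{\rho,\mathsf{B}^{(2)}}).
\end{equation*}

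The third step bounds each term on the right by its own supremum over states. The right-hand side is then at most $\lambda D(\mathsf{A}^{(1)},\mathsf{B}^{(1)})+(1-\lambda)D(\mathsf{A}^{(2)},\mathsf{B}^{(2)})$, and this bound does not depend on $\rho$. Taking the supremum over $\rho$ on the left gives joint convexity of Eq.~\eqref{eq:Dstat}.

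The only point needing care is this interchange of the supremum. The state maximizing the left side need not be optimal for either pair on the right, but this is harmless: $\sup_\rho(f+g)\le\sup f+\sup g$ holds in the required direction, because each term is bounded separately before the common supremum is taken. If the supremum is meant over states of an extended system, as in the diamond-type variant, the same argument applies verbatim. I therefore do not expect any genuine obstacle here.
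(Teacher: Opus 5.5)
Your proposal is correct and follows essentially the same route as the paper: linearity of $\mathbf{p}_{\rho,\mathsf{A}}$ in the POVM, joint convexity of the classical $D$ at fixed $\rho$, and then bounding the supremum of the weighted sum by the weighted sum of the suprema. The paper writes this directly for a general finite mixture $\sum_i\alpha_i$, and your two-term argument extends to that case by the same reasoning.
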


\begin{proof}
Joint convexity of $D$ on probability distributions states that, for any
probability distribution $\{\alpha_i\}$ and any state $\rho$,
\begin{equation}
D\!\left(\sum_i\alpha_i\mathbf{p}_{\rho,\mathsf{A}_i},
\sum_i\alpha_i\mathbf{p}_{\rho,\mathsf{B}_i}\right)
\leq\sum_i\alpha_i D(\mathbf{p}_{\rho,\mathsf{A}_i},
\mathbf{p}_{\rho,\mathsf{B}_i}).
\end{equation}
Then, denoting $\mathsf{A}=\sum_i\alpha_i\mathsf{A}_i$,
$\mathsf{B}=\sum_i\alpha_i\mathsf{B}_i$,
\begin{align}
D(\mathsf{A},\mathsf{B})
&=\max_{\rho}D\!\left(\sum_i\alpha_i\mathbf{p}_{\rho,\mathsf{A}_i},
\sum_i\alpha_i\mathbf{p}_{\rho,\mathsf{B}_i}\right)\nonumber\\
&\leq\max_{\rho}\sum_i\alpha_i D(\mathbf{p}_{\rho,\mathsf{A}_i},
\mathbf{p}_{\rho,\mathsf{B}_i})\nonumber\\
&\leq\sum_i\alpha_i\max_{\rho}
D(\mathbf{p}_{\rho,\mathsf{A}_i},\mathbf{p}_{\rho,\mathsf{B}_i})\\
&=\sum_i\alpha_i D(\mathsf{A}_i,\mathsf{B}_i).
\end{align}
\end{proof}

\begin{lemma}[Monotonicity of distance-based sharpness]
\label{lem:sharp_mono}
The distance-based sharpness defined in Eq.~\eqref{eq:D_sharp} is
monotonic under fuzzifying operations,
$S_D(\mathsf{A})\geq S_D(\mathcal{F}(\mathsf{A}))$, for every fuzzifying
operation $\mathcal{F}$, provided $D$ is contractive under CP unital maps
and jointly convex.
\end{lemma}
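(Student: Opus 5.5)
The proof combines the two properties of $D$ after using the structure of a fuzzifying operation. Fix $\mathsf{A}$ and an arbitrary fuzzifying operation $\mathcal{F}$ as in Eq.~\eqref{FO}, with parameters $\mathcal{E}^{\dagger}$, $\alpha$ and $\{q_x\}$. Let $\mathsf{T}^{\star}=\{t_x^{\star}I_A\}\in\mathcal{T}(d,n)$ be a minimizer in Eq.~\eqref{eq:D_sharp}, so that $S_D(\mathsf{A})=D(\mathsf{A},\mathsf{T}^{\star})$. (If the minimum is only an infimum, take a near-optimal $\mathsf{T}^{\star}$ and pass to the limit.) It then suffices to exhibit one trivial measurement $\mathsf{T}'$ on $\mathcal{H}_B$ with $D(\mathcal{F}(\mathsf{A}),\mathsf{T}')\le D(\mathsf{A},\mathsf{T}^{\star})$, because $S_D(\mathcal{F}(\mathsf{A}))\le D(\mathcal{F}(\mathsf{A}),\mathsf{T}')$ by definition.

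The natural candidate is $\mathsf{T}'=\mathcal{F}(\mathsf{T}^{\star})$. Unitality of $\mathcal{E}^{\dagger}$ gives $\mathcal{E}^{\dagger}(t_x^{\star}I_A)=t_x^{\star}I_B$, so
\[
\mathcal{F}(\mathsf{T}^{\star})=\{(\alpha t_x^{\star}+(1-\alpha)q_x)I_B\}_{x=1}^{n}.
\]
Its coefficients form a probability distribution, so $\mathcal{F}(\mathsf{T}^{\star})\in\mathcal{T}(d_B,n)$.

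To compare the distances, I write $\mathcal{F}(\mathsf{A})=\alpha\,\mathcal{E}^{\dagger}(\mathsf{A})+(1-\alpha)\mathsf{Q}$ and $\mathcal{F}(\mathsf{T}^{\star})=\alpha\,\mathcal{E}^{\dagger}(\mathsf{T}^{\star})+(1-\alpha)\mathsf{Q}$, where $\mathsf{Q}=\{q_xI_B\}$. Joint convexity of $D$ gives
\[
D(\mathcal{F}(\mathsf{A}),\mathcal{F}(\mathsf{T}^{\star}))\le\alpha\,D(\mathcal{E}^{\dagger}(\mathsf{A}),\mathcal{E}^{\dagger}(\mathsf{T}^{\star}))+(1-\alpha)D(\mathsf{Q},\mathsf{Q}).
\]
Faithfulness gives $D(\mathsf{Q},\mathsf{Q})=0$. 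Contractivity under the CP unital map $\mathcal{E}^{\dagger}$ then bounds the first term by $\alpha\,D(\mathsf{A},\mathsf{T}^{\star})$, and $\alpha\,D(\mathsf{A},\mathsf{T}^{\star})\le D(\mathsf{A},\mathsf{T}^{\star})=S_D(\mathsf{A})$. This chain gives $S_D(\mathcal{F}(\mathsf{A}))\le S_D(\mathsf{A})$.

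The main point requiring care is that $\mathcal{E}^{\dagger}$ may map between different Hilbert spaces. Contractivity must therefore be understood for CP unital maps $\mathcal{B}(\mathcal{H}_A)\to\mathcal{B}(\mathcal{H}_B)$, with $D$ defined on POVMs of each dimension. For the maximal statistical distance of Eq.~\eqref{eq:Dstat} I would verify this directly. For any state $\rho$ on $\mathcal{H}_B$ we have $\mathbf{p}_{\rho,\mathcal{E}^{\dagger}(\mathsf{A})}=\mathbf{p}_{\mathcal{E}(\rho),\mathsf{A}}$, so the supremum over $\rho$ on $\mathcal{H}_B$ is a supremum over the subset $\{\mathcal{E}(\rho)\}$ of states on $\mathcal{H}_A$, and hence can only decrease. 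For the diamond distance the same holds, because composing the measure-and-prepare channel with the preprocessing $\mathcal{E}$ contracts the diamond norm. Joint convexity for the induced distances is supplied by Lemma~\ref{join_convex}.
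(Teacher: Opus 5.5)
Your proof is correct and follows essentially the paper's argument: contractivity of $D$ under $\mathcal{E}^{\dagger}$, with unitality keeping trivial measurements trivial, then joint convexity over the shared component $\{q_xI\}$ with $D(\mathsf{Q},\mathsf{Q})=0$. The only difference is that you merge the paper's two-step chain through $S_D(\mathcal{E}^{\dagger}(\mathsf{A}))$ into one comparison with the explicit candidate $\mathcal{F}(\mathsf{T}^{\star})$, which needs only \emph{some} trivial measurement and so avoids the paper's final equality; that step really holds only as an inequality, in the harmless direction, since $\alpha p_x+(1-\alpha)q_x$ does not range over the whole simplex.
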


\begin{proof}
Write a fuzzifying operation as
$\mathcal{F}(\mathsf{A})=\alpha\,\mathcal{E}^{\dagger}(\mathsf{A})
+(1-\alpha)\{q_xI\}$ as in Eq.~\eqref{FO}. Then
\begin{align}
S_D(\mathsf{A})&=\min_{\sum_xp_x=1}D(\mathsf{A},\{p_xI\})\\
&\geq\min_{\sum_xp_x=1}D\!\left(\mathcal{E}^{\dagger}(\mathsf{A}),
\mathcal{E}^{\dagger}(\{p_xI\})\right)\\
&=\min_{\sum_xp_x=1}D\!\left(\mathcal{E}^{\dagger}(\mathsf{A}),\{p_xI\}\right)
=S_D(\mathcal{E}^{\dagger}(\mathsf{A})),
\end{align}
where the inequality is the contractivity of $D$ and the second equality
uses unitality of $\mathcal{E}^{\dagger}$. Joint convexity then gives
\begin{align}
&S_D(\mathcal{E}^{\dagger}(\mathsf{A}))\\
&=\min_{\sum_xp_x=1}D(\mathcal{E}^{\dagger}(\mathsf{A}),\{p_xI\})\\
&\geq\min_{\sum_xp_x=1}D\!\left(\alpha\mathcal{E}^{\dagger}(\mathsf{A})
+(1-\alpha)\{q_xI\},\,\alpha\{p_xI\}+(1-\alpha)\{q_xI\}\right)\\
&=\min_{\sum_xp'_x=1}D(\mathcal{F}(\mathsf{A}),\{p'_xI\})
=S_D(\mathcal{F}(\mathsf{A})),
\end{align}
since the convex combination of trivial measurements is again a trivial
measurement, and the minimum over $\{p'_x\}$ ranges over all probability
distributions as $\{p_x\}$ does. Combining the two chains of inequalities
gives the claim.
\end{proof}

\bibliography{mybibfile}

\end{document}